\newtheorem{Thm}{Theorem}[section]
\newtheorem{Lem}{Lemma}[section]
\newtheorem{Lem*}{Lemma}
\newtheorem{Prop}[Thm]{Proposition}
\newtheorem{Prop*}{Proposition}
\newcommand{\R}{{{\mathbb R}}}
\newcommand{\E}{{{\mathbb E}}}
\newcommand{\N}{{{\mathbb N}}}
\title{A lower bound for the spectral gap of the conjugate Kac process with 3 interacting particles}
\author{Luís Simão Ferreira\thanks{Supported by CMAF-CIO grant UIDP/04561/2020.}}
\date{\today}
\begin{document}
\maketitle

\begin{abstract}
    In this paper, we proceed as suggested in the final section of \cite{spectralgap} and prove a lower bound for the spectral gap of the conjugate Kac process with 3 interacting particles. This bound turns out to be around $0.02$, which is already physically meaningful, and we perform Monte Carlo simulations to provide a better empirical estimate for this value via entropy production inequalities. This finishes a complete quantitative estimate of the spectral gap of the Kac process.
\end{abstract}

\tableofcontents

\section{Introduction}

$N$-particle systems have been a target of flourishing research for the better part of the last 50 years, in particular due to their connection to classical statistical physics PDE's via hydrodynamic limits. Thoroughly inspecting these collision processes allows us to better understand limiting cases such as the Boltzmann equation, while avoiding dealing with the equation explicitly, since it involves strong non-linearity. 

In 1956, Marc Kac \cite{kac} introduced what we refer to as the \textit{Kac process}, a reversible Markov process for $N$ spatially homogeneous particles with one dimensional velocities, which describes the evolution of the system as particles randomly collide in a momentum and energy conserving fashion. 

A behaviour exhibited in many naturally occurring particle systems, and a desirable property of these models, is that of exponential decay to equilibrium. The easiest way one can prove such property is by showing the existence of a spectral gap for the generator of the process, and a uniform bound lets us draw conclusions as the number of particles becomes infinite. Recall that the spectral gap of an operator is defined as the infimum of its positive spectrum, which may coincide with the lowest non-zero eigenvalue, if these exist.

In essence, under the Kac process, if $L$ is the infinitesimal generator and $f_0$ is an initial distribution in our state space, then it evolves as
$f_t = e^{tL}f_0$,
and the existence of a spectral gap $\Delta$ implies an exponential decay with rate at least equal to $\Delta$.

It is hard to relate relaxation in the $N$-particle system to relaxation in the limit equation, unless one has uniform bounds on the spectral gap, independent of $N$. Kac conjectured this to be true in his toy model, and this was successfuly proven later by Janvresse \cite{janvresse}, but with one dimensional velocities and uniform jump rates, which nevertheless allows us to efficiently generate uniform points on the $n$-sphere. 

In a series of papers, Carlen, Carvalho and Loss \cite{spectralgap} managed to prove the existence of a spectral gap for a one parameter family of Kac processes in 3-dimensions, undergoing more general collisions with rates that are not bounded away from zero. We will explore next the background of these papers.

\subsection{The Kac process}

More precisely, consider an additional physical parameter $0\leq\alpha\leq2$ and an $N$-tuple of velocities $v_j$ in $\R^3$ such that
\begin{equation}
    \frac{1}{N}\sum_{j=1}^N v_j = 0, \frac{1}{N}\sum_{j=1}^N |v_j|^2 = 1,
\end{equation}
meaning that the total momentum is zero and the average kinetic energy per particle (with mass 2) is 1. These constraints describe our state space $\mathcal{S}_N$, isometric to a sphere with $3N-4$ dimensions and radius $\sqrt{N}$.

Before we move on, it is important to fix some notation. We will refer to elements of this state space $\mathcal S_N$ as a single letter $v$, $y$, and sometimes as a vector $\vec v$, $\vec y$, if it's not clear from context. Letters with subscripts such as $v_j$ shall always denote velocities in $\R^3$.

When the process begins, there is an associated exponential clock $T_{i,j}$ for each pair of velocities $(v_i,v_j)$, with parameter
\begin{equation}
    \lambda_{i,j} = N \binom{N}{2}^{-1} |v_i - v_j|^\alpha,
\end{equation}
and the minimum 
\begin{equation}
    T := \min_{i,j} T_{i,j}
\end{equation}
yields our first collision time (and colliding pair). Given the restrictions above, these collisions can be parametrized in terms of a vector $\sigma \in \mathbb{S}^2$:
\begin{align}
    v_i^*(\sigma) & = \frac{v_i+v_j}{2} + \frac{|v_i-v_j|}{2}\sigma, \\
    v_j^*(\sigma) & = \frac{v_i+v_j}{2} - \frac{|v_i-v_j|}{2}\sigma.
\end{align}
Another source of randomness on the system is then the determination of this direction vector $\sigma$. The particular choice we follow is uniform sampling, as it turns out to be more physically relevant, even though other distributions or rules could be used.

The cases $0\leq \alpha \leq 1$, with 0 being the Maxwellian molecule case and 1 the hard sphere case, are the ones of physical interest, but it turns out, as shown by Villani (regarding entropy production inequalities) \cite{sometimes} and as seen in the work of Carlen, Carvalho and Loss, that the "super hard sphere" $\alpha=2$ case provides powerful insights for lower values of the parameter, and will later be our main focus.

Their proof is based on chaos estimates that quantify the "near independence" of a finite set of coordinates. If $\sigma_N$ denotes the uniform measure on the sphere $\mathbb{S}^N(\sqrt{N})$ with radius $\sqrt{N}$, it is known that
\begin{equation}
    \lim_{N\to+\infty} \int_{\mathbb{S}^N(\sqrt{N})} \phi(v_1,...,v_k) d\sigma_N
    = \int_{\R^k}  \phi(v_1,...,v_k) d\gamma^{\otimes k},
\end{equation}
with $\gamma$ being the Gaussian measure in $\R$, and we say that the measure $\sigma_N$ is $\gamma$-chaotic. In general, if the above equality holds for $\mu$ and a sequence of measures $\mu_N$, $\mu_N$ is said to be $\mu$-chaotic.

Another important step of the proof is the introduction of a conjugate process, which allows for a clever induction argument. With the right choice of coordinates, integrals over the state space can be factored as
\begin{equation}\label{measurefactor}
    \int_{\mathcal{S}_N} \phi(\vec v) d\sigma_N = \int_B \bigg[ \int_{S_{N-1}} \phi(T_j(\vec y, v)) d\sigma_{N-1} \bigg] d\nu_N(v),
\end{equation}
with $B$ the unit ball in $\R^3$. $T_1: \mathcal{S}_{N-1}\times B \to \mathcal{S}_N$ is given by
\begin{equation}
T_1(y,v) = \bigg( \sqrt{N-1}v, \beta(v)y_1 - \frac{1}{\sqrt{N-1}}v, ..., \beta(v)y_{N-1} - \frac{1}{\sqrt{N-1}}v \bigg),
\end{equation}
while $T_j$ is defined analogously but with $v$ going to the $j$-th coordinate, and
\begin{equation}
    \beta^2(v) := \frac{N}{N-1}(1-|v|^2).
\end{equation}
Finally, the measure $\nu_N$ on $B$ is defined as
\begin{equation} \label{equilibriumradial}
    d\nu_N(v) = \frac{|\mathbb S^{3N-7}|}{|\mathbb S^{3N-4}|} (1-|v|^2)^{(3N-8)/2}dv,
\end{equation}
which is what we refer to as the equilibrium radial measure, as it is the radial marginal of the velocities for the uniform measure in $\mathcal S_N$.

As we will verify later, this formula allows for an inductive inequality of the type
\begin{equation}
    \Delta_{N,\alpha} \geq \frac{N}{N-1}\Delta_{N-1,\alpha} \hat \Delta_{N,\alpha},
\end{equation}
where $\hat \Delta_{N,\alpha}$ is the spectral gap of a \textit{conjugate Kac process}, with vastly simplified dynamics. Furthermore, the authors make use of this fact to prove that there are $K_0 > 0$ and $N_0 \in \N$ such that
\begin{equation}
    \Delta_{N,\alpha} \geq K_0 \Delta_{N_0, \alpha} > 0,
\end{equation}
leaving only the base case to show. For $N=2$, the process is trivial, as the constraints completely determine one velocity from the other. Finally, making use of a compactness argument to show that $\hat \Delta_{3,\alpha}>0$, it follows that
\begin{equation}
    \Delta_{3,\alpha} \geq \frac32 \Delta_{2,\alpha} \hat \Delta_{3,\alpha} > 0.
\end{equation}

However, unlike previous bounds that could be explicitly computed, this does not give any quantitative information about $\hat \Delta_{3,\alpha}$, besides positivity, and it remains to be shown that this gap is physically meaningful when interpreted as an expected relaxation time. In this paper, we compute a lower bound for the spectral gap of the conjugate Kac process with 3 particles.

Even though the proof provided in \cite{spectralgap} makes use of asymptotic methods in the form of quantitative chaos estimates, the results are strong enough to hold already for all $N\geq 4$, and provide a positive gap. This fails for $N=3$.

Due to the fact that $N=2$ is already trivial, and most methods allow for good estimates for $N>3$, one would expect this case to be accessible. Unfortunately, this is not true, as most methods fail to provide a positive gap in this number of dimensions.

\subsection{Structure of the paper}
This paper is organized as follows.

In section \ref{conjkac}, we introduce our main object of study, the conjugate Kac process, building up the necessary tools and finally stating our main result. We also explore the possibility of obtaining spectral gap bounds via entropy production inequalities and, while not being able to prove it analytically, present some numerical evidence of such results.

The proof is developed over section \ref{mainproof}. We begin by inspecting the spectrum of our main tool, the correlation operator $K$, and proving suitable upper bounds on the eigenvalue sequence $\kappa_{n,\ell}$. 

The proof itself is then broken down over the anti-symmetric and symmetric sectors, as they are orthogonal to each other. The first one is handled fairly quickly, while the symmetric sector, which is of bigger physical interest due to its relation to the limit equation, requires a more hands-on approach. We treat different angular momentum sectors, where different methods handle 'large' and 'small' values of $\ell$, leaving only a finite amount of eigenvalues to be computed.

\clearpage

\section{The conjugate Kac process}\label{conjkac}

We will now define the family of reversible Markov jump processes on $\mathcal S_N$, conjugate to the Kac process. Fixing a number of interacting particles $N$, $0<\alpha\leq2$, and given $\vec v \in S_N$, let $\{\hat T_1, ..., \hat T_N\}$ be a set of $N$ independent exponential variables with parameter $\lambda_k (\vec v)$ of $\hat T_k$ given by
\begin{equation}
    \lambda_k (\vec v) = \frac1N \bigg[ \frac{N^2-(1+|v_k|^2)N}{(N-1)^2} \bigg]^{\alpha/2}.
\end{equation}
Due to the restraints on energy and momentum, the maximum of $|v_k|^2$ is $N-1$, and thus the parameter is always positive.

Let $\hat T = \min \{\hat T_1, ..., \hat T_N\}$ be the first jump time, and let $k$ be the corresponding index. At this time, the process makes a conditional jump to uniform to a new point $\vec v \in S_N$, fixing $v_k$, and all other velocities are re-sampled uniformly, conditional on $v_k$. Even though this process is trivial for $N=2$, since the momentum equation completely determines one velocity from the other, our $N=3$ case is subtle. 

The generator of this process can be obtained via projection (or conditional expectation) operators. Recall that, for any $\phi \in L^2(\sigma_N)$, there is a unique element $f(v_k)$ of $L^2(\sigma_N)$ such that
\begin{equation}
    \int_{S_N}\phi(\vec v) g(v_k) d\sigma_N = \int_{S_N} f(v_k) g(v_k) d\sigma_N,
\end{equation}
and we define this way the conditional expectation of $\phi$, given $v_k$:
\begin{equation}
    P_k \phi := \E [\phi | v_k].
\end{equation}
This can also be understood as an orthogonal projection onto the subspace of square integrable functions that depend only on the $k$-th coordinate. Furthermore, the factorization formula \eqref{measurefactor} gives us an expression for this operator:
\begin{equation}
    P_k \phi (v) = \int_{\mathcal S_{N-1}} \phi (T_k(y, v_k/\sqrt{N-1}) d\sigma_{N-1}.
\end{equation}

The generator of this process can then be written as
\begin{equation}
    \hat L_{N,\alpha} f = -\frac1N \sum_{k=1}^N \frac1N \bigg[ \frac{N^2-(1+|v_k|^2)N}{(N-1)^2} \bigg]^{\alpha/2} [f-P_k f],
\end{equation}
and it is also useful to define
\begin{equation}
    w_{N}(v_k) := \frac{N^2-(1+|v_k|^2)N}{(N-1)^2}.
\end{equation}

It is easy to prove self-adjointness, as we would expect from the reversibility of the process, and its null space is spanned by the constants. Similarly as before, the spectral gap of the conjugate Kac process is defined as
\begin{equation}\label{conjgap}
    \hat \Delta_{N,\alpha} := \inf \{ \mathcal{D}_{N,\alpha} (f,f) | \langle f,1 \rangle_{L^2(\sigma_N)}, \|f\|^2_{L^2(\sigma_N)} = 1 \},
\end{equation}
with
\begin{align}\label{conjugatedirichlet}
    \mathcal{D}_{N,\alpha} (f,f) & := - \langle f, \hat L_{N,\alpha} f \rangle_{L^2(\sigma_N)}\nonumber \\
    & = \frac1N \sum_{k=1}^N \frac1N \int_{S_N} \bigg[ \frac{N^2-(1+|v_k|^2)N}{(N-1)^2} \bigg]^{\alpha/2} [f^2-f P_k f] d\sigma_N.
\end{align}

We are at this moment in conditions of explaining how this process arises. Let $L_{N,\alpha}$ be the infinitesimal generator of the Kac process,
\begin{equation}
    \mathcal{E}_{N,\alpha}(f,f) = -\langle f, L_{N,\alpha} f \rangle
\end{equation}
the Dirichlet form associated with $L_{N,\alpha}$ and $\mathcal{E}_{N,\alpha}(f,f|v_k)$ the \textit{conditional Dirichlet form} obtained by factorizing, so that
\begin{equation}
     \mathcal{E}_{N,\alpha}(f,f) = \frac{N}{N-1} \bigg( \frac1N \sum_{k=1}^N \int_B w_N^{\alpha/2}(v_k) \mathcal{E}_{N,\alpha}(f,f|v_k) d\nu_N(v_k/\sqrt{N-1}) \bigg).
\end{equation}
Then, using the spectral gap for $N-1$ particles, one has
\begin{equation}
     \mathcal{E}_{N,\alpha}(f,f) \geq \frac{N}{N-1} \Delta_{N-1,\alpha} \bigg( \frac1N \sum_{k=1}^N \int_{\mathcal{S}_N} w_N^{\alpha/2} (v_k) [f-P_k f]^2 d\sigma_N\bigg),
\end{equation}
and the quantity in brackets corresponds exactly to \eqref{conjugatedirichlet}.

Now, define the self-adjoint operators
\begin{equation}
    W^{(\alpha)} = W^{(\alpha)}(v) := \frac1N \sum_{k=1}^N w_N^{\alpha/2}(v_k) 
\end{equation}
and
\begin{equation}
    P^{(\alpha)} = \frac1N \sum_{k=1}^N w_N^{\alpha/2}(v_k) P_k,
\end{equation}
an average of weighted projections, so that we have $\hat L_{N,\alpha} = W^{(\alpha)} - P^{(\alpha)}$. For $\alpha=2$ and $N=3$, we see that $W^{(2)}=\frac34$, and \eqref{conjgap} becomes
\begin{equation}
    \hat \Delta_{3,2} = \frac34 - \mu_3,
\end{equation}
with
\begin{equation}
    \mu_3 := \sup \{ \langle f, P^{(2)} f \rangle_{ L^2(d\sigma_3) } | \|f\|_{ L^2(d\sigma_3) } = 1, \langle f, 1 \rangle_{ L^2(d\sigma_3) } = 0 \}.
\end{equation}
The spectrum of $P^{(\alpha)}$ is also studied on \cite{kspectrum}. More importantly, for $\alpha=2$, the following is true

\begin{Lem}
    The essential spectrum of $P^{(2)}$ is the interval $[0,\frac12]$, and the remaining spectrum can only consist of isolated eigenvalues in $\big(\frac12,\frac34\big)$.
\end{Lem}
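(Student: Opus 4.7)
The plan is to analyze $P^{(2)} = \tfrac{1}{3}(Q_1 + Q_2 + Q_3)$ with $Q_k := w_3(v_k)P_k$ by isolating the diagonal contribution of each $Q_k$ from the off-diagonal correlation terms. The key preliminary observation is that each $\tfrac{1}{3}Q_k$ is self-adjoint with purely essential spectrum $[0, 1/2]$: it annihilates $\ker P_k$, and on $\mathrm{ran}\,P_k \cong L^2(B, \nu_3)$ it acts as multiplication by $\tfrac{1}{3}w_3(v_k) = \tfrac{1}{4}(2-|v_k|^2)$, whose essential range is exactly $[0, 1/2]$. Thus if the three summands of $P^{(2)}$ were genuinely decoupled the lemma would follow at once; the task is to show that the cross interactions only affect the spectrum through compact (hence isolated-eigenvalue-producing) perturbations.

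To make this decoupling rigorous, I would write $P^{(2)} = T^*T$ with $Tf := \tfrac{1}{\sqrt{3}}\bigl(w_3^{1/2}(v_k) P_k f\bigr)_{k=1,2,3}$ viewed as a map $L^2(\sigma_3) \to L^2(\sigma_3)^{\oplus 3}$, and analyze the companion operator $TT^*$, which is a $3 \times 3$ block-operator matrix with diagonal blocks $\tfrac{1}{3}Q_k$ and off-diagonal blocks $\tfrac{1}{3} w_3^{1/2}(v_j)\, P_j P_k\, w_3^{1/2}(v_k)$ for $j \neq k$. The diagonal part has essential spectrum $[0, 1/2]$ by the preceding observation. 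For the off-diagonal blocks, the central fact is that the correlation operator $P_j P_k - \Pi$ (with $\Pi$ the projection onto constants) is compact on $L^2(\sigma_3)$, a consequence of the discrete, accumulating-to-zero spectrum of the correlation operator $K$ developed in Section~\ref{mainproof} and in \cite{kspectrum}. Since $w_3^{1/2}$ is bounded, each off-diagonal block is then compact up to a rank-one constant piece, so Weyl's theorem yields $\mathrm{spec}_{\mathrm{ess}}(TT^*) = [0, 1/2]$. Using that $T^*T$ and $TT^*$ have matching essential spectra away from $0$, together with closedness of the essential spectrum, one obtains $\mathrm{spec}_{\mathrm{ess}}(P^{(2)}) = [0, 1/2]$.

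For an explicit confirmation that $[0, 1/2] \subseteq \mathrm{spec}_{\mathrm{ess}}(P^{(2)})$ that bypasses the $TT^*$ detour, I would construct Weyl sequences directly: for $\lambda \in [0, 1/2)$, pick $r^* \in [0, \sqrt{2}]$ with $\tfrac{1}{3}w_3(r^*) = \lambda$ and take $f_n(\vec v) := \phi_n(|v_1|)\, Y_{\ell_n}(v_1/|v_1|)$ with $\phi_n$ a normalized radial bump concentrated near $r^*$ and $\ell_n \to \infty$. Then $\|Q_1 f_n - 3\lambda f_n\| \to 0$ by continuity of $w_3$, while $\|P_2 f_n\|, \|P_3 f_n\| \to 0$ by a Funk--Hecke-type computation on the conditional uniform measure of $v_1$ given $v_j$ (a $2$-sphere of explicit radius, read off from \eqref{measurefactor}), since spherical harmonics of degree $\ell_n$ integrate against any smooth density on a proper subsphere to quantities decaying in $\ell_n$. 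Therefore $\|P^{(2)} f_n - \lambda f_n\| \to 0$ with $f_n \rightharpoonup 0$, so $\lambda \in \mathrm{spec}_{\mathrm{ess}}(P^{(2)})$, and the endpoint $\lambda = 1/2$ is caught by closedness.

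The remaining assertions follow painlessly: anything above $1/2$ lies outside the essential spectrum by definition, hence is an isolated eigenvalue of finite multiplicity, and the global bound $\|P^{(2)}\|_{\mathrm{op}} \leq 3/4$ comes from $\hat L_{3,2} = W^{(2)} - P^{(2)} \geq 0$ together with $W^{(2)} \equiv 3/4$. The main technical obstacle I anticipate is the compactness of $P_j P_k - \Pi$ for $j \neq k$: this is essentially equivalent to the summability and decay as $(n, \ell) \to \infty$ of the eigenvalue sequence $\kappa_{n,\ell}$ of the correlation operator $K$, which is exactly the spectral analysis carried out at the start of Section~\ref{mainproof}.
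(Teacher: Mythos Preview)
The paper does not actually prove this lemma; it is quoted as a known fact from \cite{kspectrum}, so your proposal is supplying a proof rather than competing with one already in the text.

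Your $T^*T$/$TT^*$ decomposition is correct and is essentially the natural route. The diagonal blocks of $TT^*$ are multiplication by $\tfrac14(2-|v_k|^2)$ on $\mathrm{ran}\,P_k$ and zero on its complement, with essential spectrum $[0,\tfrac12]$ since $\nu_3$ has full support; the off-diagonal blocks are compact because $P_jP_k$ restricted to $\mathrm{ran}\,P_k$ is exactly the correlation operator $K$, which is compact by the results recalled in Section~\ref{corroperator}. You need not subtract the rank-one projection $\Pi$: the eigenvalue $1$ of $K$ has finite multiplicity, so $K$ itself is already compact, and hence so is $w_3^{1/2}(v_j)P_jP_kw_3^{1/2}(v_k)$. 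Weyl's theorem plus the standard identity $\sigma_{\mathrm{ess}}(T^*T)\setminus\{0\}=\sigma_{\mathrm{ess}}(TT^*)\setminus\{0\}$ then gives the essential spectrum, and $0$ is recovered either by closedness or by noting that $\bigcap_k\ker P_k$ is infinite-dimensional. Your explicit Weyl sequence also works; note that the required decay $\sup_n|\kappa_{n,\ell}|\to 0$ as $\ell\to\infty$ is immediate from \eqref{kappahatbound}, so you can bypass the Funk--Hecke computation entirely.

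There is one genuine omission. Your final paragraph yields only $P^{(2)}\le\tfrac34$, whereas the lemma places the discrete spectrum in the \emph{open} interval $(\tfrac12,\tfrac34)$, and the paper uses precisely this to conclude $\hat\Delta_{3,2}>0$ immediately afterward. Since the constant function is an eigenfunction of $P^{(2)}$ with eigenvalue exactly $\tfrac34$, you must additionally argue that this eigenspace is one-dimensional. This is short: $P^{(2)}f=\tfrac34 f$ forces $\sum_k\langle f-P_kf,\,w_3(v_k)(f-P_kf)\rangle=0$, and since $w_3(v_k)>0$ almost everywhere one gets $f=P_kf$ for every $k$, so $f$ is simultaneously a function of each single coordinate and hence constant (equivalently, invoke the statement just above \eqref{conjgap} that the null space of $\hat L_{3,2}$ is spanned by the constants). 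Without this step the strict upper bound, and with it the existence of the gap, does not follow from your argument.
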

Note that, in particular, $\frac34$ cannot be an accumulation point, and thus this proves that the spectral gap exists. Moreover, if $\mu_3\leq\frac12$, we immediately have that $\hat \Delta_{3,2}\geq0.25$, and if $\frac12 < \mu_3 < \frac34$, it is an eigenvalue of $P^{(2)}$. 

We then turn to the eigenvalue equation $P^{(2)} f = \lambda f $ to obtain the only missing estimate for a fully quantitative result on the three particle spectral gap $\hat \Delta_{3,2}$. In \cite{spectralgap}, the authors break this down to the \textit{symmetric} and \textit{anti-symmetric} eigen-sectors, as they are mutually orthogonal. A computation for the anti-symmetric case is provided, and a suggestion is made on how to proceed for the remaining case. In section \ref{antisymsec} we explain and slightly improve the former, while the latter is treated in section \ref{symsec} - this last case proved to be much harder, due to the analytical difficulties in dealing with the spectrum of the conditional expectation operator (see section \ref{corroperator}). 

Using this method, we were able to prove the following main result:
\begin{Thm}\label{maintheorem}
The spectral gap of the conjugate Kac process in 3 dimensions, with $\alpha = 2$, is at least $0.01984$.
\end{Thm}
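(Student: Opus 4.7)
The plan is to reduce the claim to a quantitative upper bound on $\mu_3$. Since the excerpt establishes $\hat\Delta_{3,2} = 3/4 - \mu_3$, it suffices to prove $\mu_3 \leq 3/4 - 0.01984$. By the lemma on the essential spectrum of $P^{(2)}$, if $\mu_3 \leq 1/2$ the conclusion follows immediately with the stronger estimate $\hat\Delta_{3,2}\geq 1/4$, so I may assume $\mu_3 \in (1/2, 3/4)$ and that it is attained as an isolated eigenvalue. All the subsequent work then targets the eigenvalue equation $P^{(2)} f = \mu_3 f$ under the constraints $\|f\|_{L^2(\sigma_3)}=1$ and $\langle f, 1\rangle_{L^2(\sigma_3)}=0$.

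Since $P^{(2)}$ commutes with permutations of the three coordinates, the next step is to split $L^2(\sigma_3)$ into its symmetric and anti-symmetric sectors, which are mutually orthogonal and $P^{(2)}$-invariant, so that the supremum can be bounded separately on each. For the anti-symmetric sector I would follow the computation sketched in \cite{spectralgap}, carried out in Section \ref{antisymsec}, tightening it enough that the resulting upper bound sits comfortably below $3/4 - 0.01984$. The symmetric sector, treated in Section \ref{symsec}, is where the bulk of the work takes place. Here the idea is to translate the eigenvalue equation into one for the correlation operator $K$ introduced in Section \ref{corroperator}, whose eigenfunctions decompose in terms of spherical harmonics and are therefore labelled by pairs $(n,\ell)$ with eigenvalues $\kappa_{n,\ell}$. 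The strategy is to combine a general decay-type estimate showing that $\kappa_{n,\ell}$ is small once $\ell$ is beyond some explicit threshold $\ell_0$, with an essentially case-by-case treatment of the finitely many sectors with $\ell \leq \ell_0$, each of which reduces to a one-dimensional eigenvalue problem amenable to rigorous bounds (and in the very smallest cases, exact computation).

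The hard part will be controlling the symmetric sector with the precision needed to produce the numerical constant $0.01984$. The large-$\ell$ decay estimate for $\kappa_{n,\ell}$ must be strong enough to keep $\ell_0$ small, because every additional sector below the threshold entails more explicit eigenvalue bookkeeping; conversely, any slack in the small-$\ell$ estimates demands a correspondingly sharper large-$\ell$ bound. Balancing these two regimes, and in particular extracting a sufficiently quantitative spectral picture of the correlation operator $K$ despite the analytical difficulties flagged in the introduction, is the true technical heart of the argument.
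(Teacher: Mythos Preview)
Your outline matches the paper's strategy almost exactly: reduce to $\mu_3$, split into symmetric and anti-symmetric sectors, handle the anti-symmetric sector by the computation of Section~\ref{antisymsec}, and in the symmetric sector pass to the eigenbasis of $K$ indexed by $(n,\ell)$, then treat large and small $\ell$ separately.

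One point deserves sharpening. You describe the large-$\ell$ regime as being controlled by ``a general decay-type estimate showing that $\kappa_{n,\ell}$ is small once $\ell$ is beyond some explicit threshold.'' The paper shows that this alone is \emph{not} enough: the crude bound $\lambda_\ell \le \tfrac12\|1+2K_\ell\|$ coming from the size of $\kappa_{n,\ell}$ only yields $\lambda_\ell\le 1$, and even with the best available estimate $\hat\kappa_{n,\ell}^2 \le \tfrac{8e}{3}(n+1)^{-1/2}(n+\ell+3/2)^{-1/2}$ one would need $\ell$ in the tens of thousands to push $\sup_n|\kappa_{n,\ell}|$ below $0.23$. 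The actual mechanism in the large-$\ell$ proof is the interaction with the multiplication operator $(1-\tfrac12|v|^2)$: expanding in the Jacobi basis and using the three-term recurrence produces coefficients $a_{n,\ell}$, $b_{n,\ell}$ whose behaviour in $\ell$ (specifically $a_{n,\ell}\to 1$ and $b_{n,\ell}\to 0$ in suitable regimes) is what drives $\lambda_\ell$ below $0.731$ already at $\ell=70$. Your plan should account for this, since attacking large $\ell$ through $\kappa_{n,\ell}$ decay alone would fail.
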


The actual computations we provide show a result slightly less then $0.02$ in order to simplify and reduce the amount of cases, but could easily be adapted to achieve this quantity at the expense of explicitly determining a larger number of eigenvalues. Note that it is expected that the true value of the gap is larger, and we will illustrate this next.

\subsection{Spectral gap bounds via entropy production inequalities}

Cedric Villani \cite{sometimes} was the first to prove entropy production inequalities for variants of the Kac process, namely with one dimensional velocities. 
Even though obtaining such bounds for the 3-dimensional Kac process is still an open problem, one can adopt a similar reasoning and attempt to derive entropy production inequalities for the simpler conjugate process. Recall that the entropy functional of a suitable density $f$ on $\mathcal S_N$ is defined as
\begin{equation}
    H(f) := \int_{\mathcal{S}_N} f\log f d\sigma_N,
\end{equation}
and the entropy production of a process with generator $\hat L_{N,\alpha}$ as 
\begin{equation}
    D_{\hat L_{N,\alpha}}(f) := -\frac{d}{dt}\bigg ( H( e^{t\hat L_{N,\alpha}}f ) \bigg ) \geq 0,
\end{equation}
so that we can look for estimates of the type
\begin{equation}
    D_{\hat L_{N,\alpha}}(f) \geq C_{N,\alpha} H(f),
\end{equation}
with $C_{N,\alpha}$ a constant independent of $f$, and depending on $\hat L_{N,\alpha}$ only via the parameters describing the process.

Results of the above type are stronger than spectral gap estimates, as a classical linearization procedure implies the latter. More precisely, taking $f := 1 + \varepsilon h$, with $\varepsilon>0$ and $$\int_{\mathcal{S}_N} h d\sigma_N= 0,$$and expanding in first order leads us to
\begin{equation}
    \hat \Delta_{N,\alpha} \geq \frac{1}{2} C_{N,\alpha}.
\end{equation}
One advantage of the entropy production approach is that we can actually measure the empirical entropy in Monte Carlo simulations, and obtain an estimate of the true spectral gap by analyzing the log-entropy decay, which we will do next.

First, by symmetrizing one can compute the entropy production functional of the conjugate process:
\begin{equation}
    D_{N,\alpha}(f) := \frac1N \sum_{k=1}^N \int_{\mathcal S_N} w_N^{\alpha/2}(v_k) \log f [f - P_k f] d\sigma_N.
\end{equation}

Specializing to $\alpha=2$, and using the convexity of $-\log t$ and the fact that 
\begin{equation}
    \frac{1}{N}\sum_{k=1}^N w_N(v) = 1,
\end{equation}
we obtain
\begin{align}
    D_{N,2} (f) \geq \bar D_{N,2} (f) := & 
    \frac1N \sum_{k=1}^N \int_{\mathcal S_N} w_N(v_k) [f \log f  - P_k f \log P_k f] d\sigma_N \\
    = & H(f) - \frac{1}{N} \sum_{k=1}^N \int_{\mathcal S_N} w_N(v_k) [P_k f \log P_k f - P_k f + 1] d\sigma_N.
\end{align}

Now, as $t \log t - t + 1 \geq 0$ for all $t > 0$ and $w_N(v) \leq \frac{N}{N-1}$ for all $v$, we see that the $k$-th summand in the second term is less than $\frac{N}{N-1} H(P_k f)$, and thus we have

\begin{equation}
    \bar D_{N,2} (f) \geq H(f) - \frac{1}{N-1} \sum_{k=1}^N H(P_k f)
\end{equation}

To continue, we need to somehow relate the entropy of the marginals $P_k f$ with the entropy of the joint distribution $f$. Fortunately, this is  precisely developed in \cite{sphericalmarginals} and \cite{youngsn}, and we make use of the CCL inequality (in 3 dimensions),
\begin{equation}\label{CCL}
    \frac{N-1}{N} \sum_{k=1}^N H(P_k f) \leq 2 H(f),
\end{equation}
thus giving us
\begin{equation}
    \bar D_{N,2} (f) \geq \bigg(1-\frac{2N}{(N-1)^2}\bigg) H(f).
\end{equation}

As for the much simpler case of $\alpha=0$ with constant jump rates, one can easily verify by slightly adapting the computations above that this yields
\begin{equation}
    C_{N,0} = \frac{N-3}{N-1}.
\end{equation}

With this quick argument, we can then state the following:
\begin{Prop}
    The entropy production functional of the Kac process in 3 dimensions, $D_{N,\alpha}$, satisfies
    \begin{equation}
        D_{N,2} (f) \geq \bigg(1-\frac{2N}{(N-1)^2}\bigg) H(f),
    \end{equation}
    and 
    \begin{equation}
        D_{N,0} (f) \geq \frac{N-3}{N-1} H(f).
    \end{equation}

    Thus, for $N \geq 4$, the spectral gap of the conjugate Kac process in 3 dimensions is at least $\frac{1}{18}$ for the super hard sphere case, and $\frac{1}{6}$ for the Maxwellian molecule case.
\end{Prop}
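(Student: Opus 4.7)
The plan is to formalize the chain of inequalities already sketched in the paragraphs preceding the statement, taking the CCL inequality \eqref{CCL} of \cite{sphericalmarginals, youngsn} as a black box. The overall route is: start from the symmetrized entropy production $D_{N,\alpha}$, strip the logarithm via Jensen, and then dominate the marginal entropies $H(P_k f)$ by $H(f)$ using CCL. The spectral gap assertion then follows from the linearization identity $\hat \Delta_{N,\alpha} \geq C_{N,\alpha}/2$ stated in the excerpt.

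For the super hard sphere case $\alpha = 2$, I first apply convexity of $t \log t$ together with the normalization $\frac{1}{N}\sum_k w_N(v_k) = 1$ to replace $\log f$ by $\log P_k f$ inside each summand, yielding the lower bound $D_{N,2}(f) \geq \bar D_{N,2}(f)$. Expanding $\bar D_{N,2}$ gives $H(f)$ minus a weighted marginal correction. Using $w_N \leq N/(N-1)$ together with $t \log t - t + 1 \geq 0$, and observing that the linear contribution $-P_k f + 1$ integrates to zero (the projections preserve the mean of $f$ against $\sigma_N$), I bound this correction above by $\frac{1}{N-1}\sum_{k=1}^N H(P_k f)$; an application of CCL then gives $\frac{2N}{(N-1)^2} H(f)$, producing the first bound.

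The Maxwellian case $\alpha = 0$ is strictly simpler, as the weights $w_N$ drop out. Symmetrization yields directly
\begin{equation*}
D_{N,0}(f) = H(f) - \frac{1}{N}\sum_{k=1}^N H(P_k f),
\end{equation*}
and CCL in the form $\sum_k H(P_k f) \leq \frac{2N}{N-1} H(f)$ gives the factor $\frac{N-3}{N-1}$ with no convexity step needed. For the spectral gap conclusion, both constants $1 - \frac{2N}{(N-1)^2}$ and $\frac{N-3}{N-1} = 1 - \frac{2}{N-1}$ are increasing in $N$ for $N \geq 2$, so the worst case is $N = 4$, yielding $\frac{1}{2}\cdot\frac{1}{9} = \frac{1}{18}$ and $\frac{1}{2}\cdot\frac{1}{3} = \frac{1}{6}$ respectively.

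The only genuinely nontrivial ingredient is the CCL inequality itself, which is cited rather than proved; once granted, the remainder is elementary convexity and algebra. The one step requiring mild care is the symmetrization producing the stated form of $D_{N,\alpha}$, which relies on self-adjointness of the $P_k$ under $\sigma_N$ to move the log from one factor to the other in the standard reversibility fashion, and the bookkeeping needed to absorb the extra factor $N/(N-1)$ coming from the weight bound cleanly into the final constant.
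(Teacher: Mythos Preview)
Your outline matches the paper's argument, but two points need correction.

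First, the Jensen step producing $D_{N,2}(f)\geq \bar D_{N,2}(f)$ relies on convexity of $-\log t$, not of $t\log t$. What is actually used is $P_k[\log f]\leq \log P_k f$, so that
\[
\int w_N(v_k)\,P_k f\cdot \log f\,d\sigma_N
=\int w_N(v_k)\,P_k f\cdot P_k[\log f]\,d\sigma_N
\leq \int w_N(v_k)\,P_k f\,\log P_k f\,d\sigma_N .
\]
Convexity of $t\log t$ would instead give $P_k[f\log f]\geq (P_k f)\log(P_k f)$, which only shows $\bar D_{N,2}(f)\geq 0$ and does not compare $D_{N,2}$ with $\bar D_{N,2}$.

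Second, your claim that for $\alpha=0$ symmetrization gives the \emph{equality}
\[
D_{N,0}(f)=H(f)-\frac{1}{N}\sum_{k=1}^N H(P_k f)
\]
is false, and so is the assertion that ``no convexity step is needed''. From the definition one has $D_{N,0}(f)=H(f)-\frac{1}{N}\sum_k \int \log f\cdot P_k f\,d\sigma_N$, and $\int \log f\cdot P_k f\,d\sigma_N=\int P_k[\log f]\cdot P_k f\,d\sigma_N$, which equals $H(P_k f)$ only when $f$ is $v_k$-measurable. In general one has only the inequality $\int \log f\cdot P_k f\,d\sigma_N\leq H(P_k f)$, again by Jensen with the concavity of $\log$. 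This is exactly why the paper says the $\alpha=0$ case is handled ``by slightly adapting the computations above''. With this correction in place your chain of inequalities and the final constants are correct.
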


Sadly, this is not quite enough, and the bounds degenerate at $N=3$, and adapting Villani's approach seems to give degenerate bounds at $N=3$ as well, even though the technique is much stronger than the one we use above. The main culprit seems to be the CCL inequality \eqref{CCL} as the extra $N/(N-1)$ factor, which isn't present for the 1-dimensional case, loses us enough for the method to fail.

This makes it clear that we're dealing with an edge case, just outside the reach of most methods, even if we can simultaneously handle both the entropy production and the spectral gap problems for all $N$ above 3.

Thus, a careful analysis of the dynamics and geometry of the problem is needed to obtain such a bound, and we make use of the information we have about the spectrum of the conditional expectation operator $K$ on the sphere and properties of orthogonal polynomials for this purpose.

\subsection{Monte Carlo simulations}

In order to observe this relaxation to equilibrium, we perform Monte Carlo simulations of the Kac process with 3 interacting particles. Even if we can't prove entropy production inequalities for the conjugate Kac process when $N=3$, this method allows us to obtain an estimate of 0.3 for the exponential entropy decay rate, and thus an estimate of about 0.15 for the spectral gap, which is only about an order of magnitude larger than the theoretical result we achieve.

Based on the factorization formula \eqref{measurefactor}, we notice that the uniform measure in $S_3$, $\sigma_3$ factorizes as
$$ d\sigma_3 = d\sigma_2 \times d\nu_3, $$
with
$$ d\nu_3 := \frac{|\mathbb{S}^2|}{|\mathbb{S}^5|} (1-|v|^2)^\frac12 dv,$$
induced by the transformation $T_1: S_2 \times B \to S_3$, 
$$T_1(y,v) := \bigg(\sqrt{2}v, \beta(v)y_1-\frac{v}{\sqrt{2}}, \beta(v)y_2-\frac{v}{\sqrt{2}}\bigg),$$
$$ \beta (v) := \sqrt{\frac32 (1-|v|^2)}.$$

As the system decays to equilibrium and the distribution of the velocities approaches uniform, the marginal distributions become closer do $d\nu_3$. In particular, by symmetry, we will only look at the radial component of this measure, which is proportional to $r^2(1-r^2)^\frac12 dr$. Additionally, the factorization formula and its symmetries reduce the conditional sampling problem to sampling from $S_2$, which is trivial. We refer to the radial distribution of the first particle as the sampled distribution, and the others as the implied distributions.

Thus, we can proceed as follows. Fixing the first coordinate, we sample $r = |v|$ from a given distribution using an acceptance/rejection mechanism, and sample $y_1$, $y_2$ uniformly, so that the point is in $S_3$. We then let the system evolve sampling first the jump time and the particle that stays fixed, identified with $\sqrt{2}v$, and sample $y_1$, $y_2$ uniformly on $S_2$.

We perform 1 million simulations of the process in Python, for $\alpha=0,2$, and use these simulations to build a histogram approximating the probability distribution over 100 bins and several timestamps. We then measure the relative entropy between the empirical measure and the equilibrium radial measure over time, and use a log-plot to showcase and estimate the exponential decay. This was done over a selection of initial distributions, and a "worst case scenario" is showcased here with the sampling distribution being $2(1-x)$, as most of the probability mass is concentrated close to zero.

In figures 1, 2 and 3, the rather quick relaxation is evident. By computing the relative entropy using a first order Euler method, we see that there is a log-linear relationship up to a certain point, where the distributions are close enough so that the numerical approximations and random noise prevent the entropy from decaying below $e^{-6}$. Meanwhile, figures 4, 5 and 6 show the evolution of the distributions over several timestamps.

\clearpage

\begin{figure}[!htb]\centering
\minipage{0.32\textwidth}
\centering
  \includegraphics[width=\linewidth]{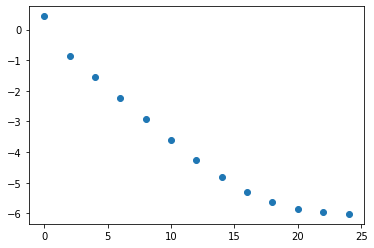}
  \caption{$\alpha=2$, sampled entropy}
  \label{fig:awesome_image1}
\endminipage\hfill
\minipage{0.32\textwidth}
\centering
  \includegraphics[width=\linewidth]{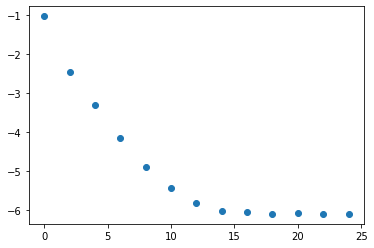}
  \caption{$\alpha=2$, implied entropy}
\endminipage\hfill
\minipage{0.32\textwidth}%
\centering
  \includegraphics[width=\linewidth]{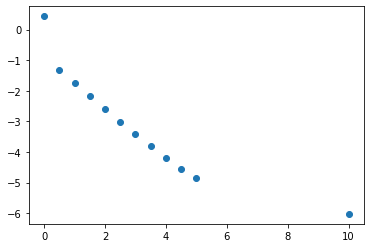}
  \caption{$\alpha=0$, sampled entropy}
\endminipage
\end{figure}

\begin{figure}[!htb]\centering
\minipage{0.32\textwidth}
\centering
  \includegraphics[width=\linewidth]{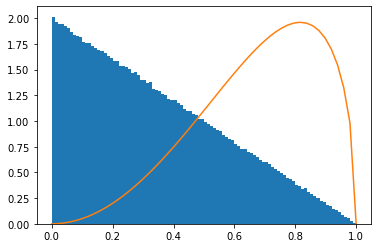}\\
  t=0s
  \label{fig:awesome_image1}
\endminipage\hfill
\minipage{0.32\textwidth}
\centering
  \includegraphics[width=\linewidth]{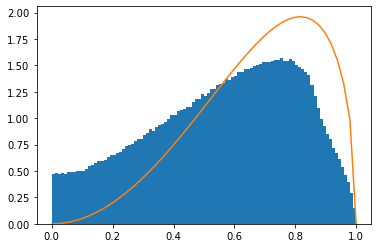}\\
  t=0.5s
  \label{fig:awesome_image2}
\endminipage\hfill
\minipage{0.32\textwidth}%
\centering
  \includegraphics[width=\linewidth]{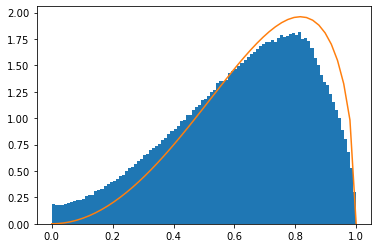}\\
  t=2s
\endminipage
\end{figure}
\begin{figure}[!htb]\centering
\minipage{0.32\textwidth}\centering
  \includegraphics[width=\linewidth]{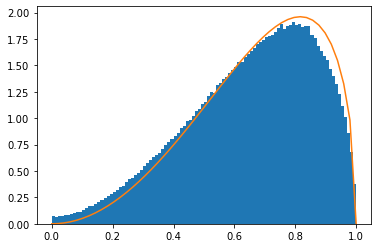}\\
  t=3.5s
\endminipage\hfill
\minipage{0.32\textwidth}
\centering
  \includegraphics[width=\linewidth]{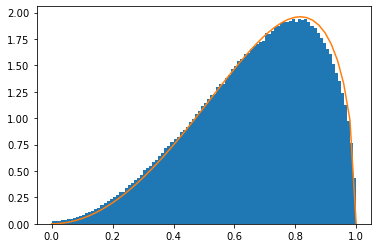}\\
  t=5s
\endminipage\hfill
\minipage{0.32\textwidth}
\centering
  \includegraphics[width=\linewidth]{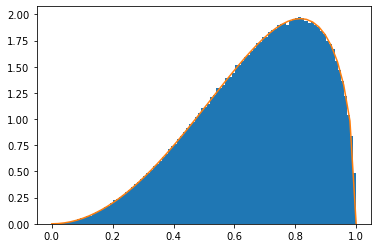}\\
  t=10s
\endminipage
\caption{$\alpha=0$, sampled velocity}
\end{figure}

\begin{figure}[!htb]\centering
\minipage{0.32\textwidth}
\centering
  \includegraphics[width=\linewidth]{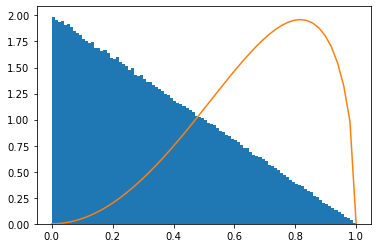}\\
  t=0s
  \label{fig:awesome_image1}
\endminipage\hfill
\minipage{0.32\textwidth}
\centering
  \includegraphics[width=\linewidth]{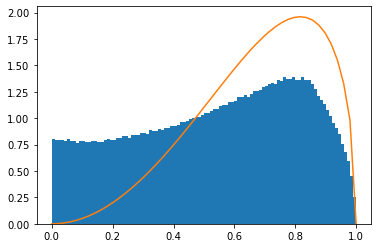}\\
  t=2s
  \label{fig:awesome_image2}
\endminipage\hfill
\minipage{0.32\textwidth}%
\centering
  \includegraphics[width=\linewidth]{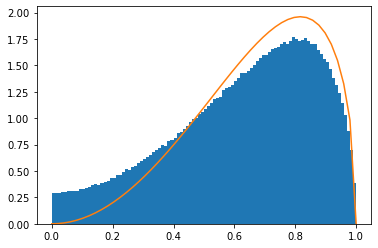}\\
  t=6s
\endminipage
\end{figure}
\begin{figure}[!htb]\centering
\minipage{0.32\textwidth}\centering
  \includegraphics[width=\linewidth]{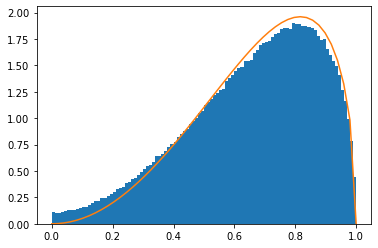}\\
  t=10s
\endminipage\hfill
\minipage{0.32\textwidth}
\centering
  \includegraphics[width=\linewidth]{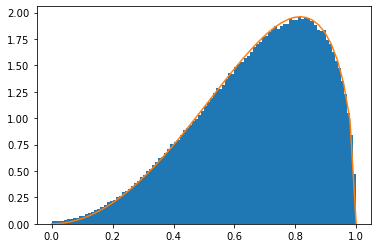}\\
  t=16s
\endminipage\hfill
\minipage{0.32\textwidth}
\centering
  \includegraphics[width=\linewidth]{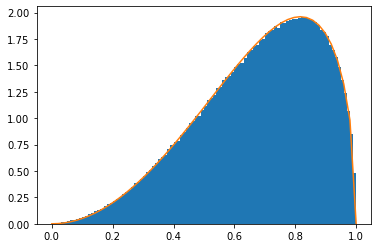}\\
  t=24s
\endminipage
\caption{$\alpha=2$, sampled velocity}
\end{figure}

\begin{figure}[!htbp]
\minipage{0.32\textwidth}\centering
  \includegraphics[width=\linewidth]{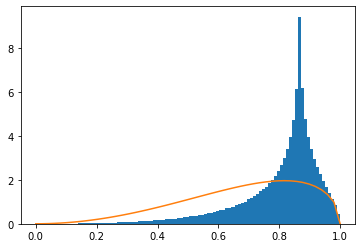}\\
  t=0s
  \label{fig:awesome_image1}
\endminipage\hfill
\minipage{0.32\textwidth}\centering
  \includegraphics[width=\linewidth]{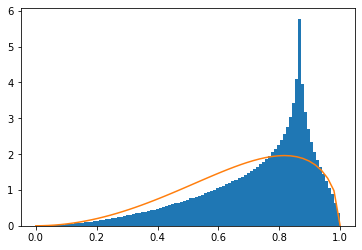}\\
  t=2s
  \label{fig:awesome_image2}
\endminipage\hfill
\minipage{0.32\textwidth}%
\centering
  \includegraphics[width=\linewidth]{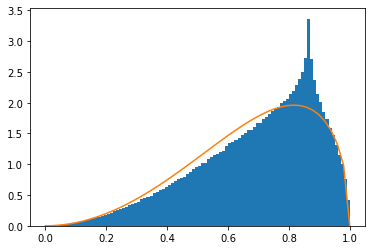}\\
  t=6s
  \label{fig:awesome_image3}
\endminipage
\end{figure}

\begin{figure}[!htbp]
\minipage{0.32\textwidth}
\centering
  \includegraphics[width=\linewidth]{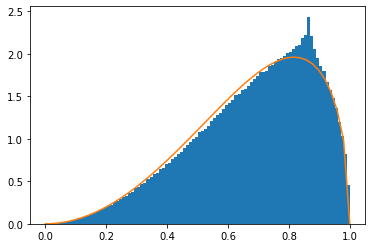}\\
  t=10s
\endminipage\hfill
\minipage{0.32\textwidth}
\centering
  \includegraphics[width=\linewidth]{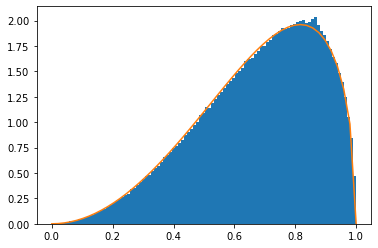}\\
  t=16s
\endminipage\hfill
\minipage{0.32\textwidth}\centering
  \includegraphics[width=\linewidth]{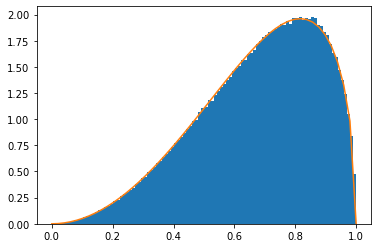}\\
  t=24s
\endminipage
\caption{$\alpha=2$, implied velocity}
\end{figure}

\clearpage

\section{Proof of the main result}\label{mainproof}

\subsection{The correlation operator on the sphere} \label{corroperator}

The main result is based on careful analysis of the eigenfunctions and the spectrum of $K$. We turn our attention to the study of this operator.

Let $B$ be the unit ball in $\R^3$, $N\geq3$, $\nu_N$ as defined in \eqref{equilibriumradial} and consider the projections $\pi_k (\vec v): S_N \to B$,
\begin{equation}
    \pi_k (\vec v) := \frac{1}{\sqrt{N-1}}v_k. 
\end{equation}
By formula \eqref{measurefactor}, we then have that
\begin{equation*}
    \int_B \psi (v) d\nu_N = \int_{S_N}\psi(\pi_k(\vec v)) d\sigma_N,
\end{equation*}
for any integrable function $\psi$ on $B$. We then define the operator $K$ on $L^2(B,\nu_N)$ by

\begin{equation}
    \langle \psi_1, K\psi_2 \rangle_{L^2(B,\nu_N)} := \int_{S_N} \psi_1^*(\pi_1(\vec v)) \psi_2(\pi_2(\vec v)) d\sigma_N.
\end{equation}
Note that the symmetries of the variables make $K$ self-adjoint. Furthermore, a simple rescaling, to extend $K$ to $L^2(\sigma_N)$, allows us to interpret this operator simply as a conditional expectation on the uniform sphere
\begin{equation*}
    K\xi (v) = \E [ \xi (v_1) | v_2 = v],
\end{equation*}
but we will not consider this scaling as the former simplifies the treatment of eigenvalues. 

Carlen, Carvalho and Loss make use of this operator to prove quantitative chaos estimates; that is, to quantify how far from independent the coordinates are, asymptotically in $N$, and most of their work is based on extensive study of the spectrum of $K$, built up over \cite{kspectrum}, \cite{markovsp}.

Before we continue, define the functions
\begin{equation}
    \eta_0 = 1, \eta_j(v) = v_j,\text{ for } 1\leq j \leq 3, \text{ and } \eta_4(v) = (|v|^2-1)/(N-1).
\end{equation}

We will now present the most important facts obtained in these papers for general $N$, and later turn to the our $N=3$ case. Following the results in \cite{kspectrum}, we have this next lemma:
\begin{Lem}
    Let $N\geq3$. The operator $K$ is compact. The function $\eta_0$ is an eigenfunction of $K$ with eigenvalue 1, and spans the corresponding eigenspace. The functions $\eta_j$, for $1\leq j \leq 4$, are eigenfunctions with eigenvalue $-\frac{1}{N-1}$, and they form an orthogonal basis for this subspace. No other eigenvalues of $K$ are larger in absolute value then $\frac{5N-3}{3(N-1)^3}$. Thus, for all $\psi \in L^2(B,\nu_N)$ orthogonal to these $\eta_j$, 
    \begin{equation}
        \|K\psi\|_2^2 \leq \frac{5N-3}{3(N-1)^3}\|\psi\|^2_2.
    \end{equation}
    Finally, every eigenvalue $\kappa$ of $K$, other than 1, $-\frac{1}{N-1}$ and $\frac{5N-3}{3(N-1)^3}$, satisfies
    \begin{equation}
        -\frac{7N-3}{3(N-1)^4} \leq \kappa \leq \frac{5N-3}{3(N-1)^3}.
    \end{equation}
\end{Lem}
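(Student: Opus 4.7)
The plan is to exploit the $SO(3)$-invariance of $K$ to reduce the problem to a family of one-dimensional operators. Since the uniform measure $\sigma_N$ is invariant under the diagonal action of $SO(3)$ on the velocity components, the integral kernel of $K$ on $B \times B$ is rotationally symmetric, so $K$ commutes with the natural unitary action of $SO(3)$ on $L^2(B,\nu_N)$. Decomposing this representation as a direct sum of angular-momentum sectors $H_\ell = \{\,f(|v|)\,Y_\ell^m(\hat v) : -\ell \le m \le \ell\,\}$, each $H_\ell$ is $K$-invariant, and within each sector $K$ reduces to a self-adjoint radial operator $K_\ell$ acting on the radial $L^2$-space with Jacobi-type weight $r^2(1-r^2)^{(3N-8)/2}$.

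For compactness, I would verify directly that $K$ is Hilbert--Schmidt. Applying the factorization formula \eqref{measurefactor} twice so as to condition on both $v_1$ and $v_2$, the integral kernel of $K$ against $d\nu_N \otimes d\nu_N$ is explicit: it is proportional to a nonnegative power of a quadratic form in $v_1, v_2$ that remains integrable for every $N \ge 3$, and the finiteness of its $L^2$-norm yields Hilbert--Schmidt, hence compact.

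The identification of the low eigenfunctions is a direct computation using moments of $\sigma_N$. That $K\eta_0 = \eta_0$ is trivial. By the rotational covariance of $K$, any linear function $\eta_j(v)=v_j$ with $1 \le j \le 3$ must satisfy $K\eta_j = \kappa_1 \eta_j$ for a common scalar $\kappa_1$; computing $\langle \eta_j, K\eta_j\rangle$ by expanding $(\sum_k v_{k,j})^2 = 0$ and using exchangeability yields $\mathbb{E}[v_{i,j} v_{k,j}] = -\frac{1}{3(N-1)}$ for $i \ne k$, and dividing by $\|\eta_j\|_{L^2(\nu_N)}^2 = \frac{1}{3(N-1)}$ gives $\kappa_1 = -\frac{1}{N-1}$. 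A parallel calculation using $\sum_k |v_k|^2 = N$ shows $K\eta_4 = -\frac{1}{N-1}\eta_4$. That these functions exhaust the $-\frac{1}{N-1}$-eigenspace, and that no other eigenvalue exceeds $\frac{5N-3}{3(N-1)^3}$ in absolute value, follows from the quantitative bound addressed next.

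The crux, and the main obstacle, is the sharp spectral bound on the orthogonal complement of $\mathrm{span}\{\eta_0,\ldots,\eta_4\}$. Within each sector $H_\ell$, one analyses $K_\ell$ in the basis of orthogonal polynomials in $|v|^2$ multiplied by $|v|^\ell Y_\ell^m$; after the substitution $s = |v|^2$ the weight becomes a Beta measure, so the orthogonal polynomials are Jacobi polynomials and the matrix elements of $K_\ell$ reduce to Beta integrals that can be evaluated in closed form as ratios of Gamma functions. For $\ell$ large, the Jacobi weight forces rapid decay of these matrix elements and $\|K_\ell\|_{\mathrm{op}}$ drops well below $\frac{5N-3}{3(N-1)^3}$; for small $\ell$, after peeling off the known eigenfunctions, one must explicitly bound the next matrix entry, with the sharpest case occurring in the $\ell = 2$, radial-degree-zero block, which produces the isolated eigenvalue $\frac{5N-3}{3(N-1)^3}$. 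The two-sided bound $-\frac{7N-3}{3(N-1)^4} \le \kappa \le \frac{5N-3}{3(N-1)^3}$ on any remaining eigenvalue then follows from the same sector-by-sector analysis, with the negative extremum realised in a specific low block. This detailed bookkeeping is precisely what is carried out in \cite{kspectrum, markovsp}, so the plan here would be to invoke those results after setting up the angular decomposition rather than redo the Gamma-function estimates from scratch.
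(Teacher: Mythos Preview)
The paper does not prove this lemma at all: it is quoted verbatim as a known result, introduced with ``Following the results in \cite{kspectrum}, we have this next lemma,'' and no argument is given. Your proposal is therefore not competing with a proof in the paper but with a citation, and since your last paragraph ends by invoking exactly the same references \cite{kspectrum, markovsp} for the hard spectral bounds, the two are aligned in substance.

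Where you go beyond the paper is in the preparatory sketch: the $SO(3)$ decomposition into angular-momentum sectors, the Hilbert--Schmidt argument for compactness, and the moment computations for $\eta_0,\ldots,\eta_4$. These are all correct in outline and match the machinery the paper later recalls (the eigenbasis $g_{n,\ell,m}$ in \eqref{basis}, the explicit $\kappa_{n,\ell}$). One small caution: you assert that the extremal eigenvalue $\tfrac{5N-3}{3(N-1)^3}$ arises in the $\ell=2$, radial-degree-zero block, but you do not verify this, and the paper does not record which $(n,\ell)$ realises it either; if you want this sketch to stand on its own rather than defer entirely to \cite{kspectrum}, that identification needs to be checked. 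As written, though, your plan is consistent with, and somewhat more explicit than, what the paper does.
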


Unfortunately, these bounds are not enough when $N=3$ and are just slightly off to allow us to obtain the main result more easily. On the other hand, the authors also manage to establish that $K$ has a complete basis of eigenfunctions of the form
\begin{equation}\label{basis}
    g_{n,\ell,m} = h_{n,\ell}(|v|)|v|^\ell \mathcal{Y}_{\ell,m}(v/|v|),
\end{equation}
where $\mathcal{Y}_{\ell,m}$ denotes the $\ell$-th degree spherical harmonic and, with $t=|v|^2-1$, $\alpha=\frac{3N-8}{2}\bigg ( =\frac12 \bigg)$ and $\beta = \ell + \frac12$,
\begin{equation*}
    h_{n,\ell}(|v|) = P_n^{(\alpha,\beta)}(t),
\end{equation*}
the $n$-th degree Jacobi polynomial (recall that the Jacobi polynomials are the orthogonal polynomials on the interval $[-1, 1]$ for the weight $(1-x)^\alpha (1+x)^\beta$). For convenience, we let capital $P$ denote these polynomials under Rodrigues' normalization,
\begin{equation}
    P_n^{(\alpha,\beta)}(t) = \frac{(-1)^n}{2^n n} (1-x)^{-\alpha}(1+x)^{-\beta} \frac{d^n}{dx^n}\bigg( (1-x)^{\alpha+n}(1+x)^{\beta+n}\bigg),
\end{equation}
while $p_n^{(\alpha,\beta)}$ will correspond to the $L^2$ orthonormal basis. Equation \eqref{basis} also allows us to conclude that the spectrum of $K$ consists of eigenvalues $\kappa_{n,\ell}$, independent of $m$ and given by the explicit formula
\begin{equation}
    \kappa_{n,\ell} = \frac{P_n^{(\alpha, \beta)}\Big(-1+\frac{2}{(N-1)^2}\Big)}{P_n^{(\alpha, \beta)}(1)} \bigg(-\frac{1}{N-1} \bigg)^\ell.
\end{equation}

While similar formulas derived from the above were useful in the asymptotic $N$ case, a large amount of cancellation happens and makes the eigenvalues take values much smaller than implied by naive bounds. Furthermore, it appears that the eigenvalues display an explicit $\mod 3$ behavior. By this, we mean that for each $\ell$, plotting the eigenvalues for $n \mod 3$ separately yields 3 eventually monotonous sequences, as we can observe bellow:

\begin{figure}[!htb]\centering
\minipage{0.32\textwidth}
\centering
  \includegraphics[width=\linewidth]{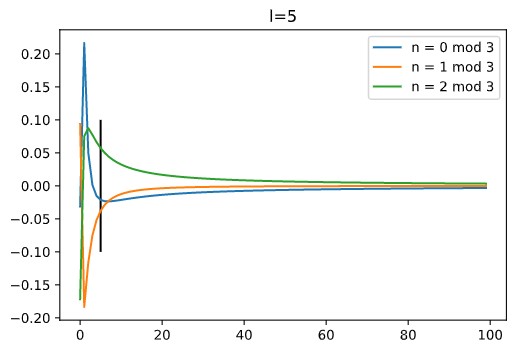}
  \caption{First 300 values of $\kappa_{n,5}$}
  \label{fig:awesome_image1}
\endminipage\hfill
\minipage{0.32\textwidth}
\centering
  \includegraphics[width=\linewidth]{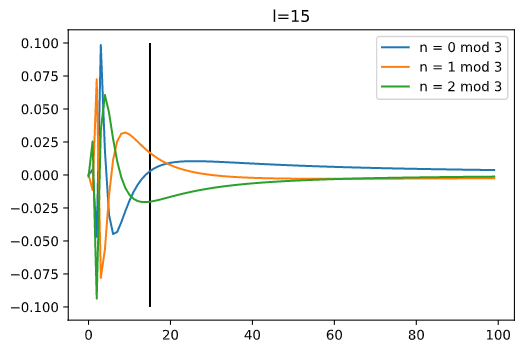}
  \caption{First 300 values of $\kappa_{n,15}$}
\endminipage\hfill
\minipage{0.32\textwidth}%
\centering
  \includegraphics[width=\linewidth]{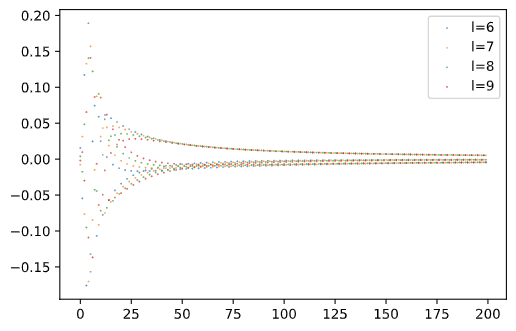}
  \caption{Overlap of the plots from $\ell=6$ to $10$}
\endminipage
\end{figure}

As these 3 different behaviours are included simultaneously in the sequences, it further makes difficult the tasks of obtaining good uniform bounds. 

One can also make use of the recurrence relations for Jacobi polynomials \cite{orthogonalpolynomials} to obtain the same type of relations for the eigenvalues $\kappa_{n,\ell}$. More precisely, for each $\ell$, we have that 
\begin{equation}\label{krecurrence}
    -\frac12 \kappa_{n,\ell} = \frac{n+\ell+\frac12}{2n+\ell+1}\kappa_{n,\ell-1} + \frac{n+\frac12}{2n+\ell+1} \kappa_{n+1,\ell-1}\ .
\end{equation}

Since we also know that
\begin{equation}\label{kzero}
    \kappa_{n,0} = \frac{1}{n+1} (-1)^n \sin\bigg(\frac{n\pi}{3}\bigg),
\end{equation}
this provides a fast and efficient way to at least numerically compute all the needed eigenvalues. We can, however, solve this recurrence relation in terms of the $\ell=0$ values, even if the obtained representation isn't very useful:

\begin{equation}
    \kappa_{n,\ell} = (-2)^\ell \sum_{j=0}^\ell \binom{\ell}{j} C_{n,\ell}^{(j)} \kappa_{n+j,0}\ ,
\end{equation}
with
\begin{equation}
    C_{n,\ell}^{(j)} = \frac{\prod_{i=1}^\ell \{n+i+1/2\}}{\prod_{i=0,i\neq j}^\ell \{2n+i+j+2\}}\ .
\end{equation}
One can see that the missing term in the denominator, when $i=j$, is exactly 2 times the numerator of $\kappa_{n+j,0}$. Unfortunately, this representation shows no indication of decay, meaning that there is a large amount of cancellation occurring (the terms are large, but very close in absolute value). Collecting the sine terms may, however, shed some light on the $\mod 3$ question.

Thus, while explicit values of Jacobi polynomials are not trivial to compute, the authors make use of uniform pointwise bounds obtained by Nevai, Erdelyi and Magnus \cite{unifjacobi}, a surprisingly deep result, to obtain monotonous bounds on these eigenvalues:

\begin{Lem}
    We have that
    \begin{equation}
        \kappa^2_{n,\ell} \leq \frac{2e}{\pi}g_1(n,\ell,N)g_2(N)g_3(n,N)g_4(n,\ell,N)\ ,
    \end{equation}
    with
    \begin{align}
        g_1(n,\ell,N) & = \bigg( \frac{4+\sqrt{9N^2-48N+65+4\ell^2+4\ell}}{3N+4n+2\ell-5} \bigg)\ ,\nonumber \\
        g_2(N) & = \bigg( \frac{(N-1)^2}{N(N-2)} \bigg)^{(3N-7)/2}\ ,\nonumber \\
        g_3(n,N) & = \frac{\Gamma(n+1)\Gamma(\frac32N-3)}{\Gamma(n+\ell+\frac32 N - \frac52)}\ ,\nonumber \\
        g_4(n,\ell,N) & =  \frac{(N-1)^2\Gamma(n+\ell+\frac32)\Gamma(\frac32N-3)}{\Gamma(n+\ell+\frac32N-\frac52)}\ .
    \end{align}
\end{Lem}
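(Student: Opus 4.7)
The plan is to push all the weight of the argument onto the Nevai--Erd\'elyi--Magnus uniform pointwise bound for Jacobi polynomials, applied at a single carefully chosen point. Starting from the explicit eigenvalue formula, with $x_0 := -1 + \frac{2}{(N-1)^2}$, $\alpha = \frac{3N-8}{2}$ and $\beta = \ell + \frac{1}{2}$, I would write
$$\kappa_{n,\ell}^2 = \frac{[P_n^{(\alpha,\beta)}(x_0)]^2}{[P_n^{(\alpha,\beta)}(1)]^2}\cdot\frac{1}{(N-1)^{2\ell}},$$
and then pass to the $L^2$-orthonormal Jacobi polynomial $p_n^{(\alpha,\beta)} = P_n^{(\alpha,\beta)}/h_n^{(\alpha,\beta)}$, so that the numerator factors as $[h_n^{(\alpha,\beta)}]^2\,[p_n^{(\alpha,\beta)}(x_0)]^2$ with $[h_n^{(\alpha,\beta)}]^2$ given by the classical gamma expression for the Jacobi $L^2$-norm. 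The denominator is rendered equally explicit via $P_n^{(\alpha,\beta)}(1) = \Gamma(n+\alpha+1)/(n!\,\Gamma(\alpha+1))$.

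The next step is the NEM inequality, of the shape
$$(1-x)^{\alpha+1/2}(1+x)^{\beta+1/2}\bigl[p_n^{(\alpha,\beta)}(x)\bigr]^2 \leq \frac{2e}{\pi}\cdot\frac{2+\sqrt{\alpha^2+\beta^2}}{2n+\alpha+\beta+1},$$
valid for $\alpha,\beta \geq -1/2$, which is automatic for $N\geq 3$. Evaluating the Jacobi weight at $x_0$ is a direct computation: $1-x_0 = 2N(N-2)/(N-1)^2$ and $1+x_0 = 2/(N-1)^2$, and combining with $\alpha+\frac12 = \frac{3N-7}{2}$ and $\beta+\frac12 = \ell+1$ yields
$$(1-x_0)^{\alpha+1/2}(1+x_0)^{\beta+1/2} = \frac{2^{\alpha+\beta+1}}{(N-1)^{2\ell+2}}\left[\frac{N(N-2)}{(N-1)^2}\right]^{(3N-7)/2}.$$

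From here the proof becomes a careful bookkeeping exercise. The powers $2^{\alpha+\beta+1}$ in $[h_n^{(\alpha,\beta)}]^2$ cancel precisely those appearing in the Jacobi weight; the factor $(N-1)^{2\ell+2}$ absorbs the $(N-1)^{-2\ell}$ inherited from the eigenvalue formula and leaves a clean $(N-1)^2$ earmarked for $g_4$; the bracket inverts to produce $g_2$; and the NEM numerator rewrites as $g_1$ after checking $4\alpha^2+4\beta^2 = 9N^2-48N+65+4\ell^2+4\ell$ and $2(2n+\alpha+\beta+1) = 3N+4n+2\ell-5$. The surviving $\Gamma$-factors, pulled from $[h_n^{(\alpha,\beta)}]^2$ and $[P_n^{(\alpha,\beta)}(1)]^{-2}$, regroup into $g_3\,g_4$ once we identify $\Gamma(\alpha+1) = \Gamma(\tfrac{3N}{2}-3)$, $\Gamma(n+\beta+1) = \Gamma(n+\ell+\tfrac{3}{2})$, and $\Gamma(n+\alpha+\beta+1) = \Gamma(n+\ell+\tfrac{3N}{2}-\tfrac{5}{2})$. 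The NEM inequality is the only genuinely deep ingredient, quoted from \cite{unifjacobi}; the main obstacle is therefore just the disciplined tracking of all these powers of $2$, $(N-1)$ and $\Gamma$-ratios, so that nothing stray survives outside the advertised closed form $g_1 g_2 g_3 g_4$.
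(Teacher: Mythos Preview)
Your plan is exactly the paper's: the lemma is not proved in detail there but is attributed directly to the Nevai--Erd\'elyi--Magnus pointwise bound \cite{unifjacobi} (with the computation deferred to \cite{kspectrum,markovsp}), and inserting that inequality into the explicit eigenvalue formula and then bookkeeping the powers of $2$, $(N-1)$ and the $\Gamma$-ratios exactly as you outline is precisely how the $g_i$ arise. One small caution when you actually execute the regrouping: the printed $g_3$ carries a typo --- its denominator should be $\Gamma(n+\tfrac{3N}{2}-3)$ rather than $\Gamma(n+\ell+\tfrac{3N}{2}-\tfrac{5}{2})$, which is consistent both with the labeling $g_3(n,N)$ and with the $N=3$ specialization \eqref{kappahatbound} immediately following the lemma.
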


This bound can be enlarged to allow for the monotonous behaviour. This time, however, when $N=3$, we are actually in a better position to further simplify the achieved bound, yielding
\begin{equation}\label{kappahatbound}
    \kappa_{n,\ell}^2 \leq \frac{8e}{3}\frac{\Gamma(n+1)}{\Gamma(n+3/2)} \frac{\Gamma(n+\ell+3/2)}{\Gamma(n+\ell+2)} \leq \frac{8e}{3} \frac{1}{(n+1)^{1/2}} \frac{1}{(n+\ell+3/2)^{1/2}} =: \hat \kappa_{n,\ell}^2 \ .
\end{equation}

Now notice that, for all $\ell \geq 4$,
\begin{equation}
    (n+1)(n+\ell+3/2) \geq (n + \sqrt{\ell})^2\ ,
\end{equation}
so that if we define
\begin{equation}\label{kappatildebound}
\widetilde{\kappa}_{n,\ell} = \sqrt{\frac{8e}{3} }\frac{1}{\sqrt{n + \sqrt{\ell}}}\ ,
\end{equation}
we have $\hat{\kappa}_{n,\ell} \leq \widetilde{\kappa}_{n,\ell}$. Both upper bounds will be useful later on. Inequality \eqref{kappahatbound} is more precise, especially when we are looking for bounds independent of $\ell$. Nevertheless, inequality \eqref{kappatildebound} has a far simpler representation and will be considered when we need to take derivatives.

\subsection{Anti-symmetric sector}\label{antisymsec}

Now that we have built the necessary tools, we first treat the simpler anti-symmetric case, which can be done mostly through clever manipulation and simple bounds. We reproduce here the approach done in \cite{spectralgap}, with some minor numerical improvements.

\begin{Lem}
    For $N=3$, the largest eigenvalue of $P^{(2)}$ on the orthogonal complement of the symmetric sector is no larger than $0.729$. Thus, either we have $\hat{\Delta}_{3,2}\geq 0.021$ or else the gap eigenfunction is symmetric.
\end{Lem}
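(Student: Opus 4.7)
The plan is to reduce the eigenvalue problem $P^{(2)} f = \mu f$, for $f$ orthogonal to the symmetric sector, to a spectral problem on $L^2(B, \nu_3)$ for a single operator built from $K$ and the weight $w(v) := \tfrac{3}{2}(1-|v|^2)$. The key observation is an algebraic constraint on the marginals $u_k := P_k f$: if $\mathrm{Sym}(f) = 0$, then averaging the identity $P_k(f \circ \pi) = P_{\pi^{-1}(k)} f$ (which follows from the $S_3$-invariance of $\sigma_3$) over $\pi \in S_3$ gives $6\,P_k \mathrm{Sym}(f) = 2 \sum_j P_j f$, so that
\[
u_1(v) + u_2(v) + u_3(v) \equiv 0 \qquad \text{as functions on } B.
\]

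Applying $P_j$ to both sides of $P^{(2)} f = \mu f$, the diagonal contribution $P_j[w(v_j) u_j(v_j)]$ equals $M_w u_j$, while each off-diagonal contribution $P_j[w(v_k) u_k(v_k)]$ with $k \neq j$ equals $K M_w u_k$, where $M_w$ denotes multiplication by $w$. Collapsing the off-diagonal sum via the constraint $\sum_k u_k = 0$ to $-K M_w u_j$ yields the reduced eigenvalue equation
\[
(I - K) M_w u_j = 3\mu\, u_j
\]
on $L^2(B, \nu_3)$. The problem thus becomes: bound the largest eigenvalue of $A := (I-K)\, M_w$ on $\{u \in L^2(B, \nu_3) : \int u \, d\nu_3 = 0\}$ (the integral condition following from $\int f\, d\sigma_3 = 0$).

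Since $w \geq 0$ on $B$, $A$ is similar to the self-adjoint operator $B := M_w^{1/2}(I-K)\, M_w^{1/2}$, whose eigenvalues admit a Rayleigh-quotient description. The pointwise bound $w \leq \tfrac{3}{2}$ combined with the spectral bound $K \geq -\tfrac{1}{2} I$ on $\{1\}^\perp$ (from Lemma 2.1) yields only the preliminary estimate $3\mu \leq \tfrac{9}{4}$, i.e. the uninformative $\mu \leq \tfrac{3}{4}$. To reach the improved bound $\mu \leq 0.729$, I exploit that $M_w$ is, up to a negative scalar, multiplication by the $K$-eigenfunction $\eta_4 = |v|^2 - \tfrac{1}{2}$, itself sitting in the $-\tfrac{1}{2}$ eigenspace. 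Expanding a candidate eigenvector in the basis $\{g_{n,\ell,m}\}$ of Section \ref{corroperator} and using the Jacobi three-term recurrence (which renders multiplication by $\eta_4$ tridiagonal in the radial index $n$ within each angular block $(\ell, m)$) quantifies the impossibility of simultaneously aligning $\sqrt{w}\, v$ with the $\eta_j$ eigenspace and concentrating $v$ where $w$ is near $\tfrac{3}{2}$: the vanishing of $\sqrt{w}$ at $|v|=1$ forces a trade-off that strictly lowers the Rayleigh quotient below $\tfrac{9}{4}$.

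The main obstacle is exactly this tightness. The direct operator-norm estimate saturates precisely at $\mu = \tfrac{3}{4}$, matching the trivial lower bound $\hat\Delta_{3,2} \geq 0$, so a purely norm-based invocation of Lemma 2.1 cannot suffice. The mere $0.021$ improvement has to be squeezed out of a tridiagonal analysis in each $(\ell, m)$ sector, combining explicit computation of the dominant radial eigenvalues of $A$ for small $n$ with tail estimates based on the uniform bounds $\widetilde{\kappa}_{n,\ell}$ on $\kappa_{n,\ell}$ established in Section \ref{corroperator}.
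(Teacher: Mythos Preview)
Your reduction is correct and is essentially the same one the paper arrives at: with $u_j=(I-K)\varphi$ your equation $(I-K)M_w u_j=3\mu u_j$ is exactly what one gets by applying $I-K$ to the paper's $M_w(I-K)\varphi=3\lambda\varphi$. The observation that the naive bound $w\le\tfrac32$, $I-K\le\tfrac32$ only gives $\mu\le\tfrac34$ is also right.

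The gap is everything after that. You propose a sector-by-sector tridiagonal analysis in the Jacobi basis, with explicit small-$n$ eigenvalue computations and $\widetilde\kappa_{n,\ell}$ tail bounds, but none of it is carried out; in particular the number $0.729$ never emerges from anything you write. That machinery is what the paper deploys for the \emph{symmetric} sector, and even there it costs several pages. For the anti-symmetric sector the paper uses a much shorter argument that avoids the Jacobi recurrence entirely: write $\sqrt{I-K}\,\varphi=\psi+\zeta$ with $\zeta$ the component in the $(-\tfrac12)$-eigenspace $\mathrm{span}\{\eta_1,\dots,\eta_4\}$ of $K$. On the remainder $\psi$ the next eigenvalue $-\tfrac38$ gives $\|\sqrt{I-K}\,\psi\|^2\le\tfrac{11}{8}\|\psi\|^2$; for the $\zeta$ piece one checks directly that $\||v|\zeta\|_2^2\ge\|\zeta\|_2^2$. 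Inserting these into the Rayleigh quotient, expanding the cross term with Cauchy--Schwarz and Young's inequality with a free parameter $t$, one obtains
\[
\lambda(\|\psi\|^2+\|\zeta\|^2)\ \le\ \frac{11}{16t}\|\psi\|^2+\frac{3}{8}(1+t)\|\zeta\|^2,
\]
and optimizing at $t\approx0.943$ gives $\lambda\le0.729$. This is the missing idea: rather than resolving $M_w$ tridiagonally, isolate the single eigenspace that saturates the crude bound and control its interaction with the rest by one scalar parameter.
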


Suppose $f$ is anti-symmetric and we assume, without loss of generality, that $f(v)=\varphi(v_1)-\varphi(v_2)$, with $\varphi$ orthogonal to the constants. Then, $P^{(2)}f=\lambda f$ becomes
\begin{equation}
    \frac{1}{N}w_{N,2}(v_1)(1-K)\varphi(v_1) -\frac{1}{N}w_{N,2}(v_2)(1-K)\varphi(v_2)=\lambda(\varphi(v_1)-\varphi(v_2))
\end{equation}

Multiplying by $\varphi(v_1)$ and integrating over $\mathcal{S}_N$, we obtain

\begin{equation}\label{eigen2form}
    \frac{1}{N}\int_{\mathcal{S}_N} w_{N,2}(v_1)|(1-K)\varphi(v_1)|^2d\sigma_N = \lambda \langle \varphi, (1-K)\varphi\rangle.
\end{equation}
From
\begin{equation}
    w_{N,2}(v_k)=\frac{N}{N-1}-\frac{N}{(N-1)^2}|v_k|^2 ,   
\end{equation}
we can write \eqref{eigen2form} as
\begin{equation}
    \frac{1}{N-1} \langle \varphi, (1-K)^2 \varphi \rangle - \frac{1}{(N-1)^2} \langle (1-K)\varphi,|v|^2 (1-K)\varphi\rangle = \lambda \langle \varphi, (1-K)\varphi\rangle
\end{equation}

After decomposing $\sqrt{1-K}\varphi = \psi + \zeta$, where $\psi$ is orthogonal to the constants, the components $v_j$ and $|v|^2$, we have that:
\begin{align}
\langle \varphi, (1-K)^2 \varphi \rangle
& = \langle \psi + \zeta, (1-K)(\psi + \zeta) \rangle \nonumber \\
& = \langle \psi, (1-K)\psi \rangle + (1+1/(N-1))\|\zeta\|^2_2 \nonumber \\
& = \|\sqrt{1-K}\psi\|^2_2+\frac{N}{N-1}\|\zeta\|^2_2
\end{align}

and 

\begin{align}
\langle (1-K)\varphi,|v|^2(1-K)\varphi \rangle = & \langle \sqrt{1-K}(\psi+\zeta),|v|^2\sqrt{1-K}(\psi+\zeta) \rangle \nonumber \\
= & \langle \sqrt{1-K}\psi,|v|^2\sqrt{1-K}\psi \rangle + \langle \sqrt{1-K}\zeta,|v|^2\sqrt{1-K}\zeta\rangle \nonumber \\
+ &  2\langle |v| \sqrt{1-K}\psi,|v|\sqrt{1-K}\zeta \rangle,
\end{align}
since $\zeta$ is an eigenfunction of $K$ with eigenvalue $-1/(N-1)$, it is also an eigenfunction of $\sqrt{1-K}$, with eigenvalue $\sqrt{1+\frac{1}{N-1}}=\sqrt{\frac{N}{N-1}}$.

It then follows that
\begin{align}
    \langle \sqrt{1-K}\varphi,|v|^2 \sqrt{1-K}\varphi\rangle 
    & \geq \langle \sqrt{1-K}\psi,|v|^2\sqrt{1-K}\psi \rangle + \langle \sqrt{1-K}\zeta,|v|^2\sqrt{1-K}\zeta\rangle \nonumber \\
    & - 2\||v|\sqrt{1-K}\psi\||_2\sqrt{\frac{N}{N-1}}\||v|\zeta\|_2 \nonumber \\
    & \geq \bigg(1-\frac{1}{t}\bigg)\langle \sqrt{1-K}\psi,|v|^2\sqrt{1-K}\psi \rangle + (1-t)\frac{N}{N-1}\||v|\zeta\|_2^2,
\end{align}
using the Cauchy-Schwarz and arithmetic-geometric mean inequalities.

Before we move on, it is useful to note that $\||v|\zeta\|_2^2 \geq \|\zeta\|^2_2$. To see this for $N=3$, write 
\begin{equation}\label{zeta}
    \zeta = \sum_{j=1}^4 a_j \eta_j, 
\end{equation}
with $\eta_j(v) = \boldsymbol{e}_j \cdot v$ for $j=1,2,3$ and $\eta_4 = |v|^2-1$ as discussed in section \ref{corroperator}.

From \eqref{zeta}, a simple computation yields
\begin{equation}
    \| |v|\zeta\|^2_2 = \sum_{j=1}^4 |a_j|^2 \||v|\eta_j\|^2_2\ ,
\end{equation}
and while a bit more cumbersome, one can also show that
\begin{equation}
    \int_{\mathcal S_3} |v_1|^4 d\sigma_3 = \frac54
\end{equation}
and
\begin{equation}
    \int_{\mathcal S_3} |v_1|^6 d\sigma_3 = \frac74.
\end{equation}
Together, the equations above give us
\begin{equation}
    \||v|\eta_j \|^2_2 = \frac{5}{4} \text{ for } j=1,2,3 \text{ and } \||v|\eta_4\|=1,
\end{equation}
which prove what we needed.

Finally, from the trivial bound $\langle \sqrt{1-K}\psi,|v|^2\sqrt{1-K}\psi \rangle \leq (N-1)\|\sqrt{1-K}\psi\|^2_2$, we have that, for all $0<t<1$,
\begin{equation}
    \langle \sqrt{1-K}\varphi,|v|^2 \sqrt{1-K}\varphi\rangle
    \geq \bigg(1-\frac{1}{t}\bigg)(N-1)\|\sqrt{1-K}\psi\|^2_2 + (1-t)\frac{N}{N-1}\|\zeta\|_2^2\ .
\end{equation}

Specializing to $N=3$, the second most negative eigenvalue of $K$ is $-\frac{3}{8}$, which yields $\|\sqrt{1-K}\psi\|^2_2\leq \frac{11}{8}\|\psi\|^2_2$. Combining all of the above,

\begin{align}
    \lambda (\|\psi\|^2_2+\|\zeta\|^2_2)  & \leq \frac{1}{2}\cdot\frac{11}{8} \|\psi\|^2_2+\frac{3}{4}\|\zeta\|^2_2-\frac{1}{2}\cdot\frac{11}{8}\bigg(1-\frac{1}{t}\bigg)\|\psi\|^2_2-\frac{1}{4}\frac{3}{2}(1-t)\|\zeta\|^2_2 \nonumber \\
    & = \frac{11}{16t}\|\psi\|_2^2+	\frac{3}{8}\bigg(1+t\bigg)\|\zeta\|^2_2 \ .
\end{align}
With $t=0.943$, this yields $\lambda\leq 0.729$.

\subsection{Symmetric sector} \label{symsec}

The remaining task is to bound the second largest eigenvalue of $P^{(2)}$
in the symmetric sector. Assume $f$ is of the form
\begin{equation}
    f(\Vec{v})=\sum_{j=1}^N \varphi(v_j).
\end{equation}

Then, we can write $P^{(2)}f = \lambda f$ as
\begin{equation}
    \frac{1}{N}\sum_{j=1}^N w_{N,2}(v_j)(\varphi(v_j) + (N-1)K\varphi(v_j))=\lambda\sum_{j=1}^N \varphi(v_j).
\end{equation}
Theorem 2.5 in \cite{spectralgap} is here extremely important, as it allows us to treat the above sum component wise:

\begin{equation}
    \frac{1}{N}w_{N,2}(v)(\varphi(v) + (N-1)K\varphi(v))=\lambda\varphi(v).
\end{equation}
By integrating against $\varphi(v)$, it follows that
\begin{equation}\label{inteq}
    \frac{1}{N}\int_{\mathcal{S}_N}\varphi(v_1)w_{N,2}(v_1)(\varphi(v_1)+(N-1)K\varphi(v_1))d\sigma_N = \lambda \|\varphi\|^2_2.
\end{equation}

As, once again,
\begin{equation}
    w_{N,2}(v)=\frac{N}{N-1}-\frac{N}{(N-1)^2}|v|^2,
\end{equation}
and focusing on $N=3$, \eqref{inteq} becomes
\begin{eqnarray}\label{inteq2}
 \lambda \|\varphi\|^2_2   &=& \frac{1}{2}\int_{\mathcal{S}_3}\varphi(v_1)(1- \tfrac12 |v|^2)(1+2K)\varphi(v_1)d\sigma_N\nonumber\\
& = & \frac12 \langle (1- \tfrac12 |v|^2)\varphi, (1+2K) \varphi \rangle .
\end{eqnarray}

Since
\begin{equation}
    -1 \leq \bigg(1 -\frac12 |v|^2\bigg) \leq 1,\text{ }\bigg\| \bigg(1- \frac12 |v|^2\bigg)\varphi\bigg\|_2 \leq \|\varphi\|_2,
\end{equation}
we obtain the first bound
\begin{equation}\label{crude}
    \lambda \leq \frac12 \|1+2K\|_2.
\end{equation}
Unfortunately, this bound is useless as is, since the largest eigenvalue of $K$, other than $1$, is $\frac12$, giving us $\lambda \leq 1$. Thus, we need to be more incisive on the spectrum of $K$ and its interaction with the multiplication operator $(1-\frac12 |v|^2)$.

Consider the operator
\begin{equation}
    M:=|v|^2(1+(N-1)K).
\end{equation}

Recalling the eigenbasis of $K$, 
\begin{equation}
    g_{n,l,m}=h_{n,l}(|v|^2)|v|^l \mathcal{Y}_{l,m}(v/|v|),
\end{equation}
we see that $M$ commutes with rotations and thus different angular momentum sectors will be mutually orthogonal, so we can consider separately the dependence in $\ell$ and $m$ (in fact, we will ignore $m$ completely, as $g_{n,\ell,m}$, $g_{n,\ell,m'}$ belong to the same $\ell$-eigenspace). Thus, the quantities we are interested in estimating are
\begin{equation}
    \lambda_\ell := \sup_{\|\varphi\|=1} \frac14 \langle (1-t)\varphi, (1+2K) \varphi \rangle ,
\end{equation}
with the supremum being taken over the $\ell$ sector.
Moreover, as $h_{n,l}(|v|^2)=P_n^{(\alpha,\beta)}(t)$, with $t=|v|^2-1$, $\beta=l+\frac{1}{2}$, we see that $1-\frac12|v|^2 = \frac12 (1-t)$ in the Jacobi variables, and we can exploit the properties of Jacobi polynomials.

The slow simultaneous decay in $n$ and $\ell$ we obtain from expressions like \eqref{kappatildebound} pose an obstacle for computing meaningful bounds. Thus, we break down the problem over several ranges. 

First, we look for a uniform bound given large enough $\ell>\ell^*$, but such that $\ell^*$ is as small as possible. 

For $5<\ell<\ell^*$, the eigenvalues are already small enough for \eqref{crude} to work as well, and we make use of a uniform bound in $n$ to reduce this problem to explicitly computing a finite amount of these. Note that the recurrence formula (recurrence) makes this fairly trivial and efficient. 

Finally, we provide an argument based on linear algebra for $\ell \leq5$, where closed form expressions are simple enough for this.

\subsubsection{Large $\ell$ case}

Let $f_{n,\ell,m}$ be a re-scaling of $g_{n,\ell,m}$ such that the basis is orthonormal, and fix $\ell$, $m$.

Writing $P_n := P_n^{(\alpha,\beta)}$, the three-term relation for Jacobi polynomials \cite{orthogonalpolynomials} is 
\begin{equation}
    P_{n+1}(t)=(A_n t + B_n)P_n(t) + C_n P_{n-1}(t)\ ,
\end{equation}
with
\begin{align}
    A_n & = \frac{(2n+\alpha + \beta +1)(2n+\alpha+\beta+2)}{2(n+1)(n+\alpha+\beta+1)}\ , \nonumber \\
    B_n & = \frac{(\alpha^2-\beta^2)(2n+\alpha+\beta+1)}{2(n+1)(n+\alpha+\beta+1)(2n+\alpha+\beta)}\ , \nonumber \\
    C_n & = \frac{(n+\alpha)(2n+\alpha+\beta+2)(n+\beta)}{(n+1)(n+\alpha+\beta+1)(2n+\alpha+\beta)}\ ,
\end{align}
which gives
\begin{equation}
  tP_n = (1/A_n) P_{n+1} - (B_n/A_n) P_n + (C_n/A_n) P_{n-1}\ .  
\end{equation}

As we are working with orthonormal functions, we have to re-scale the equality. By the orthogonality relations for Jacobi polynomials,
\begin{equation}
    \langle P_n^{(\alpha, \beta)}, P_n^{(\alpha, \beta)}\rangle_{L^2(\mu^{(\alpha, \beta)})} = \frac{2^{\alpha+\beta+1}}{2n+\alpha+\beta+1}
    \frac{\Gamma(n+\alpha+1)\Gamma(n+\beta+1) }{\Gamma(n+\alpha+\beta+1)n!}\delta_{nm}\ ,
\end{equation}
we first compute
\begin{align}
    & \|P_n^{(\alpha,\beta)}\|_2^2 = \frac{2^{\alpha+\beta+1}}{2n+\alpha+\beta+1}
    \frac{\Gamma(n+\alpha+1)\Gamma(n+\beta+1) }{\Gamma(n+\alpha+\beta+1)n!}\ ,\nonumber \\
    & \frac{\|P_{n-1}^{(\alpha,\beta)}\|_2^2}{\|P_n^{(\alpha,\beta)}\|_2^2} = \frac{2n+\ell+2}{2n+\ell} \frac{n(n+\ell+1)}{(n+1/2)(n+\ell+1/2)}\ , \nonumber\\
    & \frac{\|P_{n+1}^{(\alpha,\beta)}\|_2^2}{\|P_{n}^{(\alpha,\beta)}\|_2^2} = \frac{2n+\ell+2}{2n+\ell+4}\frac{(n+3/2)(n+\ell+3/2)}{(n+1)(n+\ell+2)}\ . 
\end{align}
Defining the scaling factors
\begin{align}
    F_n^-  & := \frac{\|P_{n-1}\|}{\|P_n\|} = \sqrt{\frac{n(2n+\alpha+\beta+1)(n+\alpha+\beta)}{(n+\alpha)(n+\beta)(2n+\alpha+\beta-1)}}\ , \nonumber \\
    F^+_n & := \frac{\|P_{n+1}\|}{\|P_n\|} = \sqrt{\frac{(n+\alpha+1)(n+\beta+1)(2n+\alpha+\beta+1)}{(n+1)(n+\alpha+\beta+1)(2n+\alpha+\beta+3)}}\ ,
\end{align}
the three-term relation for the multiplication operator in the orthonormal Jacobi polynomials is
\begin{equation}
    tp_n = c_n p_{n-1} + a_n p_n + b_n p_{n+1},
\end{equation}
with
\begin{equation}
    a_n := -\frac{B_n}{A_n} \text{, }b_n :=\frac{F^+_n}{A_n}\text{, }c_n := \frac{F^-_n C_n}{A_n}.
\end{equation}

Notice that, by the self-adjointness of multiplication by $t$, we must have $c_n = b_{n-1}$. After some algebraic manipulation, we verify this and also obtain
\begin{equation}
    b_{n,\ell} = 2\sqrt{\frac{(n+1)(n+3/2)(n+\ell+3/2)(n+\ell+2)}{(2n+\ell+2)(2n+\ell+3)^2(2n+\ell+4)}},
\end{equation}
\begin{equation}
    a_{n,\ell} = \frac{\ell(\ell+1)}{(2n+\ell+1)(2n+\ell+3)}.
\end{equation}
Evidently $a_{n,\ell} \geq \widetilde{a}_{n,\ell}$ where
\begin{equation}\label{atilde}
\widetilde{a}_{n,\ell} := \left(\frac{\ell}{2n+\ell +3}\right)^2 = \left(1 - \frac{2n+3}{2n+\ell + 3}\right)^2\ .
\end{equation}
It is clear that $\widetilde{a}_{n,\ell}$ is decreasing in $n$ and increasing in $\ell$. 
It will also be useful to have a bound on $ b_{n,\ell}$ that is simpler to estimate. For this, notice that we can pair the 'close' terms in the numerator, and compare with those in the denominator. We then have

\begin{align}
    b_{n,\ell} \leq & 2\sqrt{\frac{(n+3/2)^2(n+\ell+2)^2}{(2n+\ell+2)^2(2n+\ell+3)^2}} 
    =  \frac{2n+3}{2n+\ell+3} \frac{n+\ell+2}{2n+\ell+2} \nonumber \\
    =  &\bigg(1-\frac{\ell}{2n + \ell + 3}\bigg)\bigg(1 - \frac{n}{2n+\ell+2}\bigg)
    \leq  \bigg(1-\frac{\ell}{2n + \ell + 3}\bigg)\bigg(1 - \frac{n}{2n+\ell+3}\bigg)\ .
\end{align}
We therefore define
\begin{equation}\label{btil}
\widetilde{b}_{n,\ell} :=  \bigg(1-\frac{\ell}{2n + \ell + 3}\bigg)\bigg(1 - \frac{n}{2n+\ell+3}\bigg)\ .
\end{equation}

In order to better understand the behaviour of these sequences, we treat them as functions of continuous variables. Differentiating in $\ell$ we find
\begin{equation}
    \frac{{\partial}}{{\partial}\ell}\widetilde{b}_{n,\ell} = -\frac{(2\ell +6)n + 3\ell + 9}{(2n + \ell +3)^3}\ ,
\end{equation}
so that $\widetilde{b}_{n,\ell} $ is decreasing in $\ell$.  Differentiating in $n$, we see that
\begin{equation}
    \frac{{\partial}}{{\partial}n}\widetilde{b}_{n,\ell} = \frac{2\ell^2 + 3\ell - (6n+9)}{(2n + \ell +3)^3}\ ,
\end{equation}
so that $\widetilde{b}_{n,\ell}$ is increasing in $n$ for 
$$
n \leq  \frac{2\ell^2 + 3\ell -9}{6}\ .
$$

Suppose now $\varphi$ belongs to the $(\ell,m)$ angular momentum sector, so that we can expand it as
\begin{equation}
    \varphi = \sum_{n=0}^{+\infty} \varphi_n f_{n,\ell,m} \ .
\end{equation}

Observing that
\begin{align}
    & (1+2K)\varphi = \sum_{n=1}^{+\infty}(1+2k_{n,\ell})\varphi_n f_{n,\ell,m}\ , \\
    & (1-t)\varphi = \sum_{n=1}^{+\infty}(b_{n-1} \varphi_{n-1} + (1-a_n)\varphi_n + c_{n+1} \varphi_{n+1})f_{n,\ell,m}\ ,
\end{align}
we finally derive
\begin{align}
    \langle (1-t)\varphi,(1+2K)\varphi\rangle = 
    & - \sum_{n=1}^{+\infty}b_{n-1,\ell} (1+2k_{n,\ell})\varphi_{n-1}\varphi_n \nonumber\\
    & + \sum_{n=1}^{+\infty}(1-a_{n,\ell})(1+2k_{n,\ell})\varphi_n^2   \nonumber\\
    & - \sum_{n=1}^{+\infty}b_{n, \ell} (1+2k_{n,\ell})\varphi_{n+1}\varphi_n\ .
\end{align}

Taking the supremum on each term and using Cauchy-Schwarz on the $\varphi_n \varphi_{n\pm 1}$ terms yields, and recalling the upper bound \eqref{kappatildebound} $\kappa_{n,\ell} \leq \widetilde \kappa_{n,\ell}$,
\begin{align}
    \lambda_\ell & \leq \frac14 \bigg[2\sup_n \{ b_{n,\ell} (1+2\kappa_{n,\ell}) \} + \sup_n \{(1-a_{n,\ell})(1+2\kappa_{n,\ell}) \}\bigg]  \nonumber\\
    & \leq \frac14 \bigg[2\sup_n \{ \widetilde{b}_{n,\ell} (1+2 \widetilde \kappa_{n,\ell}) \} + \sup_n \{(1-\widetilde{a}_{n,\ell})(1+2\widetilde \kappa_{n,\ell}) \}\bigg].
\end{align}

We can then turn our attention to the sequences $\widetilde{b}_{n,\ell} (1+2 \widetilde \kappa_{n,\ell})$ and $(1-\widetilde{a}_{n,\ell})(1+2\widetilde \kappa_{n,\ell})$.

\begin{Prop}
For all $\ell \geq 70$, $\lambda_{\ell} \leq 0.73016$. 
\end{Prop}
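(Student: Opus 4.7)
The plan is to exploit pointwise monotonicity in $\ell$ to reduce the problem to the single case $\ell = 70$, and then carry out a one-variable calculus argument.

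First I would verify that each of the factors $\widetilde{b}_{n,\ell}$, $1-\widetilde{a}_{n,\ell}$, and $1+2\widetilde{\kappa}_{n,\ell}$ is non-increasing in $\ell$ at every fixed $n$. Monotonicity of $\widetilde{b}$ is already recorded in the excerpt via the explicit sign of $\partial_\ell \widetilde{b}_{n,\ell}$; monotonicity of $\widetilde{\kappa}_{n,\ell} = \sqrt{8e/3}/\sqrt{n+\sqrt\ell}$ is immediate from the formula; and for $1-\widetilde{a}_{n,\ell}$ the derivative $\partial_\ell \widetilde{a}_{n,\ell} = 2\ell(2n+3)/(2n+\ell+3)^3$ is manifestly non-negative. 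Since all factors are non-negative, the two sequences $\widetilde{b}_{n,\ell}(1+2\widetilde{\kappa}_{n,\ell})$ and $(1-\widetilde{a}_{n,\ell})(1+2\widetilde{\kappa}_{n,\ell})$ are pointwise non-increasing in $\ell$, and consequently so is each of their suprema over $n$. This reduces the proposition to proving $\lambda_{70}\leq 0.73016$.

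For the base case I would analyse both $n\mapsto \widetilde{b}_{n,70}(1+2\widetilde{\kappa}_{n,70})$ and $n\mapsto (1-\widetilde{a}_{n,70})(1+2\widetilde{\kappa}_{n,70})$ as smooth functions of a continuous variable $n\geq 0$. In each case one factor is monotone increasing on the entire relevant range --- $\widetilde{b}_{n,\ell}$ is increasing in $n$ on $n\leq (2\ell^2+3\ell-9)/6$, which at $\ell=70$ is more than $1600$, and $1-\widetilde{a}_{n,\ell}$ is plainly increasing --- while $1+2\widetilde{\kappa}_{n,\ell}$ is decreasing in $n$, with a limit at $n\to\infty$ that is strictly below typical interior values. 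Each product therefore has a unique interior maximum. I would locate it by computing the derivative and detecting a sign change across a short integer interval $[n_-,n_+]$, then bound the function on that interval by combining the right-endpoint value of the increasing factor with the left-endpoint value of the decreasing one. This reduces each of the two supremum estimates to finitely many explicit arithmetic evaluations.

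The main obstacle is the relatively small slack that $0.73016$ leaves over the true maximum of the bracketed expression at $\ell=70$, so the interval localisation and the resulting endpoint bound have to be carried out with reasonable precision. A useful tool for tightening the arithmetic is the identity
\[ 2\widetilde{b}_{n,\ell} \;=\; (1-\widetilde{a}_{n,\ell}) + \frac{3(2n+3)}{(2n+\ell+3)^{2}}, \]
derived directly from the closed forms of $\widetilde{a}$ and $\widetilde{b}$, which couples the two estimates at a common $n$ and streamlines the dependence on $\ell$. If the simple bound $\widetilde{\kappa}_{n,\ell}$ turns out to be too lossy at small $n$, where $\sqrt\ell$ dominates the denominator, one can additionally substitute the sharper gamma-function bound $\widehat{\kappa}_{n,\ell}$ of \eqref{kappahatbound}.
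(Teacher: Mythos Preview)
Your overall architecture matches the paper's proof: reduce to $\ell=70$ by pointwise monotonicity in $\ell$ of $\widetilde b_{n,\ell}$, $1-\widetilde a_{n,\ell}$ and $\widetilde\kappa_{n,\ell}$, then bound the two suprema over $n$ separately and add. The paper obtains precisely the numbers $\sup_n 2\widetilde b_{n,70}(1+2\widetilde\kappa_{n,70})\le 1.4855$ (attained near $n=53$) and $\sup_n(1-\widetilde a_{n,70})(1+2\widetilde\kappa_{n,70})\le 1.4351$ (attained near $n=66$), giving $(1.4855+1.4351)/4\le 0.73016$.

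The substantive gap is your inference ``one factor increasing, the other decreasing, therefore the product has a unique interior maximum.'' That implication is false in general, and nothing in the explicit forms of $\widetilde a$, $\widetilde b$, $\widetilde\kappa$ makes it obvious here. Your localisation scheme---find one sign change of the derivative on $[n_-,n_+]$ and bound the product there by pairing endpoint values of the monotone factors---controls the function only on $[n_-,n_+]$; outside that window you are tacitly invoking the unproven uniqueness to conclude monotonicity. If there were a second bump you would miss it entirely. The paper avoids this by a different reduction to a finite check: it computes $\partial_n\bigl[(1-\widetilde a_{n,\ell})(1+2\widetilde\kappa_{n,\ell})\bigr]$ explicitly, and shows via elementary inequalities that for $n\ge \ell^{3/2}$ (so $n\ge 586$ when $\ell=70$) this derivative is negative; the value at $n=586$ is already below $1.22$, and the remaining $n<586$ are inspected directly. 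The $\widetilde b$--term is handled the same way. Your plan becomes correct if you replace the uniqueness claim with an analogous eventual-monotonicity threshold (or, alternatively, prove that the logarithmic derivative is strictly decreasing so the critical point is indeed unique); either route closes the argument, but as written the step is not justified.
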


\begin{proof}
We first compute the derivative in $n$ of $(1-\widetilde{a}_{n,\ell})(1+2\widetilde{ \kappa}_{n,\ell})$,
 which we find
$$
\frac{4\ell^2}{3} \frac{ 4\sqrt{6e} + 3\sqrt{n+\sqrt{\ell}}}{(2n+\ell + 3)^3  \sqrt{n+\sqrt{\ell}}}  - \frac23 \frac{\left(1- \frac{k^2}{(2n+\ell+3)^2}\right)\sqrt{6e}}{(n + \sqrt{\ell})^{3/2}}\ .
$$
which has the same sign as
\begin{equation}\label{abound}
{2\ell^2} \frac{ 4\sqrt{6e} + 3\sqrt{n+\sqrt{\ell}}}{(2n+\ell + 3)^3 }  - \frac{\left(1- \frac{k^2}{(2n+\ell+3)^2}\right)\sqrt{6e}}{n + \sqrt{\ell}}\ .
\end{equation}
Now assume $n\geq \ell^{3/2}$.  Then, we have that ${2\ell^2} \leq (2n+\ell + 3)^{4/3}$, $\sqrt{n+\sqrt{\ell}} \leq (2n+\ell + 3)^{1/2}$,
and
$$
\frac{k^2}{(2n+\ell+3)^2} \leq \frac{1}{4\ell}\ .
$$
Thus, when $n\geq \ell^{3/2}$, the quantity in \eqref{abound} is bounded above by
$$
 \frac{ 4\sqrt{6e/\sqrt{\ell}} + 3}{(n + \sqrt{\ell})^{7/6} }  - \frac{\left(1- \frac{1}{4\ell}\right)\sqrt{6e}}{n + \sqrt{\ell}} \ .
$$
Hence our derivative is negative when $n \geq \ell^{3/2}$ and 
$$
(n + \sqrt{\ell})^{1/6} \geq  \frac{ 4\sqrt{6e/\sqrt{\ell}} + 3}{\left(1- \frac{1}{4\ell}\right)\sqrt{6e}}\ .
$$
For $\ell = 70$, this is satisfied for $n\geq 86$, and $70^{3/2} \leq 586$. Hence, for $\ell =70$, $(1-\widetilde{a}_{n,\ell})(1+2\widetilde{ \kappa}_{n,\ell})$ is decreasing in $n$ for al $n\geq 586$, and these values of $n$ and $\ell$ is it less than $1.22$. 
Checking the finitely many remaining cases, one finds that the maximum occurs at $n=66$, and is less than $1.4351$. 

Proceeding similarly for $2 \widetilde{b}_{n,\ell} (1+2\tilde \kappa_{n,\ell})$ we find the for $\ell =70$, the maximum occurs at $n= 53$, and is no more than $1.4855$. 
Altogether we find,
$$\lambda_{70} \leq \frac{1.4855+1.4351}{4} \leq 0.73016\ .$$
Since it is clear that $b_{n,\ell} (1+2\tilde \kappa_{,\ell})$ and $(1-a_{n,\ell})(1+2\tilde \kappa_{n,\ell})$ are decreasing in $\ell$, this proves that $\lambda_\ell \leq 0.73016$ for all $\ell \geq 70$.
\end{proof}

\subsubsection{Small $\ell$}

By inequality \eqref{kappatildebound}, and ignoring the dependence in $\ell$, we see that for $n\geq151$ we have $\kappa_{n,\ell} \leq 0.23$.
Numerically checking, this holds as well for all $6\leq \ell \leq 50$ and $0\leq n \leq 151$.  Thus, for $5 < \ell < 70$, the norm of $K_\ell$, the restriction of $K$ to the $\ell$ angular momentum sector satisfies
$$
\|K_\ell\| \leq  0.23\ .
$$
It follows from \eqref{crude} that for $5 < \ell < 70$,
$$
\lambda_\ell \leq \frac12(1+2\cdot0.23) = 0.73\ .
$$
Taking into consideration the $\ell$ dependency can also help reduce the amount of cases needed to check.

This leaves only the $\ell = 0,1,2,3,4,5$ cases. Below, we provide a computation for $\ell=0$, making use of the explicit expressions obtained via \eqref{krecurrence} and \eqref{kzero}. The same computation can be carried out analogously for the remaining cases, at the expense of explicitly computing the three term relation factors and the eigenvalues, and taking an appropriate choice of index for a break point.

For $\ell = 0$, the operator $\varphi(v) \mapsto  (1 - \tfrac12 |v|^2)\varphi(v)$  is given in the Jacobi basis by the matrix
$$
X := \frac14 \left[\begin{array} {cccccc} 
\phantom{-}2 & -1 & \phantom{-}0 & \phantom{-}0 & \phantom{-}0 & \dots\\
-1 & \phantom{-}2 & -1 &\phantom{-}0 & \phantom{-}0  & \dots\\
\phantom{-}0 & -1 & \phantom{-}2 & -1 & \phantom{-}0  & \dots\\
\phantom{-}0 & \phantom{-}0 & -1 & \phantom{-} 2 & -1 & \dots \\
\phantom{-}0 & \phantom{-}0 &\phantom{-} 0 &-1& \phantom{-}2  & \dots\\
\phantom{-}\vdots & \phantom{-}\vdots &\phantom{-}\vdots & \phantom{-}\vdots & \vdots &\ddots
\end{array}\right] \ .
$$

Let $Y$ be the diagonal matrix whose $n$th diagonal entry is $1 + 2\kappa_{n+1,0}$.
The upper $5\times 5$ block of $Z := \frac12 (YX +XY)$ is 
$$ \left[\begin{array}{ccccc} 
\phantom{-}\frac12 & -\frac{5}{16} & \phantom{-}0 & \phantom{-}0  & \phantom{-}0 \\
-\frac{5}{16} & \phantom{-}\frac34 & -\frac{21}{80} &\phantom{-} 0 & \phantom{-}0 \\
\phantom{-}0& -\frac{21}{80} & \phantom{-}\frac{3}{10} & -\frac15 & \phantom{-}0 \\
\phantom{-}0 & \phantom{-}0 & -\frac15 & \phantom{-}\frac12  & -\frac27  \\
\phantom{-}0 & \phantom{-}0 & \phantom{-}0 & -\frac27 & \phantom{-}\frac{9}{14}
\end{array}\right]\ .
$$
According to Maple, the largest eigenvalue of  this matrix  is less than $1.0412$.   We also have that 
$$
Z_{6,5} = Z_{5,6} = -\frac{57}{224}\ .
$$
Let $\varphi$ be any unit vector in $\ell^2$, and let $\eta$ be the vector with $\eta_j = \varphi_j$ for $j=1,2,3,4$, and $\eta_j =0$ for $j > 4$. Let $\zeta := \varphi - \eta$. Then
$$\langle \varphi, Z \varphi\rangle =  \langle \eta, \widehat{Z}\eta\rangle
 +    \langle \zeta, Z \zeta\rangle - \frac{57}{224} 2\varphi_4\varphi_5 \ .$$
 We have 
 $$
 \langle \eta, \widehat{Z}\eta\rangle  \leq 1.0412 \|\eta\|^2\ ,
 $$
 and by the Schwarz argument, since the largest relevant eigenvalue of $K$ is $\tfrac{1}{10}$, 
 $$
 \langle \zeta, Z \zeta\rangle  \leq \frac{6}{5}\|\zeta\|^2\ .
 $$
 Thus, $\langle \varphi, Z \varphi\rangle$ is less than the largest eigenvalue of 
 $$
 \left[ \begin{array}{cc}
 1.0412 & -\frac{57}{224}\\  -\frac{57}{224} & \frac65\end{array}\right] \ ,
 $$
 which is no more than $1.388$. Dividing by $2$, we get our bound on the top eigenvalue in the $\ell = 0$ sector, $0.694$.

\section{Conclusion}
Comparing the bounds obtained in each sector, theorem \ref{maintheorem} follows. Obviously, the result is only as good as the worst bound we obtain. While a bound of $0.02$ is certainly achievable as we have seen, it is not clear if this method could yield a much better result.

Nevertheless, we appear to be off by one order of magnitude, which is decent enough, and we saw how the erratic behaviour of the eigenvalues $\kappa_{n,\ell}$ poses an obstacle towards this.

Thus, a future goal would be proving entropy production inequalities with non-degenerate 'constants', first for the conjugate process and possibly for the Kac process itself following Villani's work. This is out of reach at the moment, and careful geometrical analysis is needed due to the state space constraints. 

It is possible, however, that these quantities can be related analogously to their respective spectral gaps, and new insights can be obtained this way.

\renewcommand{\abstractname}{Acknowledgements}
\begin{abstract}
The author would like to thank CMAF-CIO for the opportunity to develop this work and professor José Francisco Rodrigues for bringing this topic to my attention. 

I also would like to extend my deepest gratitude to professor Eric Carlen, my supervisor, for all the patience, clear and intuitive explanations and everything that I have learned regarding functional analysis and mathematical physics for the duration of this project.
\end{abstract}

\clearpage
\appendix

\section{Small $\ell$ computations}\label{smallellcomp}

For $\ell=1,...,5$, we may proceed exactly the same way as for the $\ell=0$ case. Recall that, through the recurrence relation \eqref{krecurrence}, we are able to obtain explicit formulas for eigenvalues in each $\ell$ sector. For instance, we can always write $\kappa_{n,\ell}$ as a linear combination of $\kappa_{n,0}$, $\kappa_{n+1,0}$, ..., $\kappa_{n+\ell,0}$, and we see that
\begin{align}
    \kappa_{n,1} &  = -(\kappa_{n,0} + \kappa_{n+1,0}) \ , \\
    \kappa_{n,2} &  = \frac{n+5/2}{n+2} \kappa_{n,0} + \kappa_{n+1,0} + \frac{n+3/2}{n+2}\kappa_{n+2,0} \ .
\end{align}
Once again, these simplify if we consider separate cases given $n \mod 3$, and analogous formulas hold for $\ell=3,4,5$, but with 4, 5 and 6 terms respectively.

We begin by computing the upper $5\times 5$ block of the $Z$ matrix, as well as its largest eigenvalue, and note the remainder entries that separate it from the lower block (the coefficient of the $\varphi_5 \varphi_6$ term). Finally, we compute the largest relevant eigenvalue $\kappa_{n,\ell}$ for the Schwarz argument.

This procedure is represented in the table below, with fractions denoting the exact values and decimals as close upper bounds.

\begin{center}
 \begin{tabular}{||c c c c c||} 
 \hline
 $\ell$ & Upper block bound & Remainder & Largest $\kappa_{n,\ell}$ & Eigenvalue bound \\ [0.5ex] 
 \hline\hline
 0 & 1.0412 & -57/224 & 1/10  & 0.694 \\ 
 \hline
 1 & 0.946 & -0.2729 & 1/8 & 0.7052 \\ 
 \hline
 2 & 0.895 & -0.2084 & 0.12 & 0.669 \\ 
 \hline
 3 & 0.81 & -0.254 & 0.105 & 0.667 \\ 
 \hline
 4 & 0.784 & -0.2275 & 0.12  & 0.6671 \\ 
 \hline
 5 & 0.754 & -0.206 & 0.1 & 0.6403 \\  [1ex] 
 \hline
\end{tabular}
\end{center}

\section{Python code for the Monte Carlo simulations}
We leave in this appendix the Python code that was used for the Monte Carlo simulations of the $\alpha=2$ case, as the only difference in the Maxwellian case are the constant jump rates $1/N$. Please note that the code is not optimized and vast memory and run time improvements could be made. The helper functions are defined as close as possible to our theoretical conventions and notations.

\begin{lstlisting}
N = 3

def norm_2(v):

    return sum([x**2 for x in v])

def exp_sample(lamb=1):

    u = np.random.rand()
    X = -np.log(1-u)/lamb
    return X

def lamb(v):

    l = [0]*N
    for i in range(N):
        l[i] = (N**2 - (1 + norm_2(v[i]))*N)/( N*(N-1)**2 )
    return l

def sample_jump_time(v):

    params_ = lamb(v)
    jump_time = exp_sample( params_[0] )
    jump_index = 0
    
    for i in range(1,N):
        candidate = exp_sample( params_[i] )
        if candidate < jump_time:
            jump_time = candidate
            jump_index = i
    return jump_time, jump_index  

def beta(v):
    return np.sqrt(3/2 * (1-norm_2(v)))

def conditional_jump_to_uniform3(vel,index):
    
    new_vel = vel
    v = vel[index]/np.sqrt(2)
    
    y1 = np.random.normal(size=3)
    y2 = -y1
    scale = (np.sqrt(norm_2(y1)+norm_2(y2)))/np.sqrt(2)
    
    y1 = y1/scale
    y2 = y2/scale
    
    if index == 0:
        new_vel[1] = beta(v)*y1 - v/np.sqrt(2)
        new_vel[2] = beta(v)*y2 - v/np.sqrt(2)
    elif index == 1:
        new_vel[0] = beta(v)*y1 - v/np.sqrt(2)
        new_vel[2] = beta(v)*y2 - v/np.sqrt(2)
    elif index == 2:
        new_vel[0] = beta(v)*y1 - v/np.sqrt(2)
        new_vel[1] = beta(v)*y2 - v/np.sqrt(2)
        
    return new_vel

n_sim = 1000000
sim_count = 0
frames = [2, 4, 6, 8, 10, 12, 14, 16, 18, 20, 22, 24]

first_sample = []
first_sample2 = []
first_sample3 = []
final_sample = [[] for x in range(12)]
final_sample2 = [[] for x in range(12)]
final_sample3 = [[] for x in range(12)]
index_dist = [[] for x in range(12)]

while sim_count < n_sim:
    
    r = np.random.rand()
    u = np.random.rand()
    
    while u > (1-r):
        r = np.random.rand()
        u = np.random.rand()
        
    v = np.random.normal(size=3)
    v = r*v/np.sqrt(norm_2(v))
    first_sample.append(2*r)
    
    # initialize the remaining coordinates according to the parametrization
    v2 = np.random.normal(size=3)
    v3 = -v2
    scale = (np.sqrt(norm_2(v2)+norm_2(v3)))/np.sqrt(2)
    
    v2 = v2/scale
    v3 = v3/scale
    
    vel = [np.sqrt(2)*v,beta(v)*v2 - v/np.sqrt(2), beta(v)*v3 - v/np.sqrt(2)]
    first_sample2.append(norm_2(vel[1]))
    first_sample3.append(norm_2(vel[2]))
    
    t = 0
        
    for i in range(len(frames)):
        T = frames[i]
        while t<T:
            tau, index = sample_jump_time(vel)
            index_dist[i].append(index)
            t += tau
        
            vel = conditional_jump_to_uniform3(vel, index)
    
        final_sample[i].append(np.sqrt(norm_2(vel[0])))
        final_sample2[i].append(np.sqrt(norm_2(vel[1])))
        final_sample3[i].append(np.sqrt(norm_2(vel[2])))
    
    sim_count += 1

entropy = []

first_dist = [x/2 for x in first_sample]
counts, bins, bars = plt.hist(first_dist, bins=100, density=True)
ent = sum([counts[i]*np.log(counts[i]/(5.09295817894*(bins[i])**2 *
    (1-bins[i]**2)**(1/2)))/100 for i in range(1,len(counts))])
entropy.append(ent)
x = np.linspace(0,1)
plt.plot(x, 5.09295817894*x**2 * (1-x**2)**(1/2))
plt.show()

for i in range(12):
    final_dist = [x/np.sqrt(2) for x in final_sample[i]]
    counts, bins, bars = plt.hist(final_dist, bins=100, density=True)
    ent = [counts[i]/100 * np.log( counts[i] /
    ((bins[i])**2*(1-bins[i]**2)**(1/2)*5.09295817894)) 
    for i in range(1,len(counts))]
    total_ent = sum(ent)
    entropy.append(total_ent)
    plt.plot(x, 5.09295817894*x**2 * (1-x**2)**(1/2))
    plt.show()
    
frames0 = [0, 2, 4, 6, 8, 10, 12, 14, 16, 18, 20, 22, 24]
plt.scatter(frames0, y = np.log(entropy))
\end{lstlisting}

\end{document}